\definecolor{deeppurple}{RGB}{100,0,120} % choose colors
\definecolor{darkgreen}{RGB}{0,150,0}
\definecolor{darkblue}{RGB}{0,0,130}
\newtheorem{theorem}{Theorem}
\newtheorem{lemma}[theorem]{Lemma}
\renewenvironment{proof}[1][Proof]{\noindent\textbf{#1:} }{\ $\Box$}
\theoremstyle{remark}
\newcommand*{\C}{\mathbb{C}}
\newcommand*{\E}{\mathbb{E}}
\newcommand*{\one}{\mathbbm{1}} % needs \usepackage{bbm}
\def\EE{\mathbbm{E}}
\newcommand*{\Eref}[1]{Eq.~(\ref{#1})}
\newcommand*{\eps}{\epsilon}
\newcommand*{\veps}{\varepsilon}
\newcommand*{\Tr}{\mathrm{Tr}}
\newcommand*{\tr}{\mathrm{tr}}
\newcommand*{\e}{\mathrm{e}}
\newcommand*{\ket}[1]{|{#1}\rangle}
\newcommand*{\bra}[1]{\langle{#1}|}
\newcommand*{\ketbra}[2]{|{#1}\rangle\!\langle{#2}|}
\newcommand*{\proj}[1]{\ketbra{#1}{#1}}
\providecommand{\abs}[1]{\lvert#1\rvert} 
\providecommand{\norm}[1]{\lVert#1\rVert}
\newcommand{\set}[1]{\lbrace #1 \rbrace}
\DeclareMathOperator{\rank}{rank}
\DeclareMathOperator{\poly}{poly}
\def\Id{\mathbbm{1}}
\def\P{\mathcal{P}}
\def\calA{\mathcal{A}} % Renamed to \calA to avoid conflict with \A defined above
\newcommand*{\st}{\ \ \mbox{s.t.} \ }
\newcommand{\calE}{\mathcal{E}}
\newcommand{\calI}{\mathcal{I}}
\newcommand{\calP}{\mathcal{P}}
\newcommand{\CC}{\mathbb{C}}
\newcommand{\HH}{\mathbb{H}}
\newcommand{\RR}{\mathbb{R}}
\newcommand{\ox}{\otimes}
\newcommand{\ds}{\mathrm{DS}}
\newcommand{\lasso}{\mathrm{Lasso}}
\begin{document}

\title{Quantum Tomography via Compressed Sensing:\\
Error Bounds, Sample Complexity, and Efficient Estimators}

\author{Steven T.\ Flammia}
\affiliation{Department of Computer Science and Engineering, University of Washington, Seattle, WA, USA}
\author{David Gross}
\affiliation{Institute of Physics, University of Freiburg, 79104 Freiburg, Germany}
\author{Yi-Kai Liu}
\affiliation{National Institute of Standards and Technology, Gaithersburg, MD, USA}
\author{Jens Eisert}
\affiliation{Dahlem Center for Complex Quantum Systems, Freie Universit\"{a}t Berlin, 14195 Berlin, Germany}

\date{May 18, 2012}

\begin{abstract}
Intuitively, if a density operator has small rank, then it should be easier to estimate from experimental data, since in this case only a few eigenvectors need to be learned. We prove two complementary results that confirm this intuition. First, we show that a low-rank density matrix can be estimated using fewer copies of the state, i.e., the \textit{sample complexity} of tomography decreases with the rank. Second, we show that unknown low-rank states can be reconstructed from an \textit{incomplete} set of measurements, using techniques from compressed sensing and matrix completion. These techniques use simple Pauli measurements, and their output can be certified without making any assumptions about the unknown state.

We give a new theoretical analysis of compressed tomography, based on the \textit{restricted isometry property} (RIP) for low-rank matrices. Using these tools, we obtain near-optimal error bounds, for the realistic situation where the data contains noise due to finite statistics, and the density matrix is full-rank with decaying eigenvalues. We also obtain upper-bounds on the sample complexity of compressed tomography, and almost-matching lower bounds on the sample complexity of any procedure using adaptive sequences of Pauli measurements.

Using numerical simulations, we compare the performance of two compressed sensing estimators---the matrix Dantzig selector and the matrix Lasso---with standard maximum-likelihood estimation (MLE). We find that, given comparable experimental resources, the compressed sensing estimators consistently produce higher-fidelity state reconstructions than MLE.  In addition, the use of an incomplete set of measurements leads to faster classical processing with no loss of accuracy.

Finally, we show how to certify the accuracy of a low rank estimate using direct fidelity estimation and we describe a method for compressed quantum process tomography that works for processes with small Kraus rank, and requires only Pauli eigenstate preparations and Pauli measurements.  
\end{abstract}

\maketitle
%------------------------------------------------------------------------------------------------------------%

% !TEX root = ../tradeoff.tex

%------------------------------------------------------------------------------------------------------------%
\section{Introduction}
%------------------------------------------------------------------------------------------------------------%

In recent years there has been amazing progress in studying complex quantum mechanical systems under controlled laboratory conditions~\cite{Nature2008}. Quantum tomography of states and processes in an invaluable tool used in many such experiments, because it enables a complete characterization of the state of a quantum system or process (see e.g.~\cite{Smithey1993, Altepeter2003, OBrien2003, OBrien2004, Roos2004, Resch2005, Haffner2005, Myrskog2005, Smith2006, Riebe2006, Filipp2009, Medendorp2011, Barreiro2011, Fedorov2012, Liu2012}). Unfortunately, tomography is very resource intensive, and scales exponentially with the size of the system.  For example, a system of $n$ spin-1/2 particles (qubits) has a Hilbert space with dimension $d = 2^n$, and the state of the system is described by a density matrix with $d^2 = 4^n$ entries.

Recently a new approach to tomography was proposed:
\textit{compressed} quantum tomography, based on techniques from
compressed sensing~\cite{Gross2010,Gross2011}. The basic idea is to concentrate
on states that are well approximated by density matrices of rank $r \ll
d$.  
This approach can be applied to many realistic experimental situations, 
where the ideal state of the system is pure, and physical constraints 
(e.g., low temperature, or the locality of interactions) ensure that 
the actual (noisy) state still has low entropy. 
% (and not a maximally mixed state on the full Hilbert space). 

This approach is convenient because it does not require detailed knowledge 
about the system. However, note that when such knowledge is available, 
one can use alternative formulations of compressed tomography,
with different notions of sparsity, to further reduce the dimensionality of the
problem~\cite{Shabani2011}. We will compare these methods in
Section~\ref{S:process-related-work}.  
% Although the state is mostly supported on a low-dimensional subspace, 
% one assumes no prior knowledge about the structure of this subspace.

The main challenge in compressed tomography is how to exploit this low-rank structure, 
when one does not know the subspace on which the state is supported. 
Consider the example of a pure quantum state. Since pure states are
specified by only $O(d)$ numbers, it seems plausible that one could be
reconstructed after measuring only $O(d)$ observables, compared with
$O(d^2)$ for a general mixed state.  While this intuition is indeed
correct~\cite{Amiet1999, Flammia2005, Merkel2010, Heinosaari2011}, it is a challenge to devise a
practical tomography scheme that takes advantage of this. In
particular, one is restricted to those measurements that can be easily
performed in the lab; furthermore, one then has to find a pure state
consistent with measured data~\cite{Kaznady2009}, preferably by some
procedure that is computationally efficient (note that finding
minimum-rank solutions is NP-hard in general~\cite{Natarajan1995}).

Compressed tomography provides a solution that meets all of these
practical requirements~\cite{Gross2010,Gross2011}.  It requires
measurements of Pauli observables, which are feasible in many
experimental systems.  In total, it uses a random subset of $m =
O(rd\log d)$ Pauli observables, which is
just slightly more than the $O(rd)$ degrees of freedom that specify an
arbitrary rank $r$ state.  Then the density matrix $\rho$ is
reconstructed by solving a convex program.
This can be done efficiently using general-purpose SDP
solvers~\cite{Sturm1999}, or specialized algorithms for larger
instances~\cite{Cai2010, Becker2010, Ma2011}. The scheme is robust to noise and continues to perform well when the measurements are imprecise or when the state is only close to a low-rank state.  

Here we follow up on Refs.~\cite{Gross2010,Gross2011} by giving a stronger (and completely different) theoretical analysis, which is based on the \textit{restricted isometry property} (RIP) \cite{Liu2011, Candes2011, Recht2007}.  This answers a number of questions that could not be addressed satisfactorily using the earlier techniques based on dual certificates.  First, how large is the error in our estimated density matrix, when the true state is full-rank with decaying eigenvalues? We show that the error is not much larger than the ``tail'' of the eigenvalue spectrum of the true state. Second, how large is the \textit{sample complexity} of compressed tomography, i.e., how many copies of the unknown state are needed, to estimate its density matrix?  We show that compressed tomography achieves nearly the optimal sample complexity among all procedures using Pauli measurements, and, surprisingly, the sample complexity of compressed tomography is nearly independent of the number of measurement settings $m$, so long as $m \geq \Omega(rd \poly\log d)$.

In addition, we use numerical simulations to investigate the question: given a fixed time $T$ during which an experiment can be run, is it better to do compressed tomography or full tomography, i.e., is it better to use a few measurement settings and repeat them many times, or do all possible measurements with fewer repetitions?  For the situations we simulate, we find that compressed tomography provides better accuracy at a reduced computational cost compared to standard maximum-likelihood estimation.

Finally, we provide two useful tools:  a procedure for certifying the accuracy of low-rank state estimates, and a very simple compressed sensing technique for quantum process tomography.

We now describe these results in more detail.

\textit{Theoretical analysis using RIP.} 
% First, we prove much stronger error bounds, using a different technique than before, known as the restricted isometry property (RIP) \cite{Liu2011, Candes2011, Recht2007}.  These error bounds are nearly optimal, though they require slightly more measurement settings with $m = O(rd\log^6 d)$, a bound that we believe can be improved.
% Roughly speaking, previous work used convex duality to characterize the solution to the convex optimization.  
We use a fundamental geometric fact:  the manifold of rank-$r$ matrices in $\CC^{d\times d}$ can be embedded into $O(rd\poly\log d)$ dimensions, with small distortion in the 2-norm.  An embedding that does this is said to satisfy the \textit{restricted isometry property} (RIP) \cite{Recht2007}.  In \cite{Liu2011} it was shown that such an embedding can be realized using the expectation values of a random subset of $O(rd\poly\log d)$ Pauli matrices.  This implies the existence of so-called ``universal'' methods for low-rank matrix recovery:  there exists a \textit{fixed} set of $O(rd\poly\log d)$ Pauli measurements, that has the ability to reconstruct \textit{every} rank-$r$ $d\times d$ matrix.  Moreover, with high probability, a random choice of Pauli measurements will achieve this.  (The earlier results of \cite{Gross2010} placed the quantifiers in the opposite order:  for every rank-$r$ $d\times d$ matrix $\rho$, most sets of $O(rd\poly\log d)$ Pauli measurements can reconstruct that \textit{particular} matrix $\rho$.)

Intuitively, the RIP says that a set of random Pauli measurements is sensitive to all low-rank errors simultaneously. This is important, because it implies stronger error bounds for low-rank matrix recovery \cite{Candes2011}. These bounds show that, when the unknown matrix $\rho$ is \textit{full-rank}, our method returns a (certifiable) rank-$r$ approximation of $\rho$, that is almost as good as the best such approximation (corresponding to the truncated eigenvalue decomposition of $\rho$). 

In Ref.~\cite{Candes2011}, these error bounds were used to show the accuracy of certain compressed sensing estimators, for measurements with additive Gaussian noise.  Here, we use them to upper-bound the sample complexity of our compressed tomography scheme.  (That is, we bound the errors due to estimating each Pauli expectation value from a finite number of experiments.)  Roughly speaking, we show that our scheme uses $O(r^2d^2\log d)$ copies to characterize a rank-$r$ state (up to constant error in trace norm).  When $r = d$, this agrees with the sample complexity of full tomography. Our proof assumes a binomial noise model, but minor modifications could extend this result to other relevant noise models, such as multinomial, Gaussian, or Poissonian noise. 

Furthermore, we show an information-theoretic lower bound for tomography of rank-$r$ states using adaptive sequences of single-copy Pauli measurements: at least $\Omega(r^2 d^2/\log d)$ copies are needed to obtain an estimate with constant accuracy in the trace distance. This generalizes a result from Ref.~\cite{Flammia2011} for pure states. 
Therefore, our upper bound on the sample complexity of compressed tomography is nearly tight, and compressed tomography nearly achieves the optimal sample complexity among all possible methods using Pauli measurements.

Our observation that incomplete sets of observables are often sufficient to unambiguously specify a state gives rise to a new degree of freedom when designing experiments: when aiming to reduce statistical noise in the reconstruction, one can either estimate a small set of observables relatively accurately, or else a large (e.g. complete) set of observables relatively coarsely. Our bounds (as well as our numerics) show that, remarkably, over a very large range of $m$ the \emph{only} quantity relevant for the reconstruction error is $t$, the total number of experiments performed. It does not matter over how many observables the repetitions are distributed. Thus, when fixing $t$ and varying $m$, the reduction in the number of observables and the increase in the number of measurements per observable have no net effect with regard to the fidelity of the estimate, so long as $m \ge \Omega(rd \poly\log d)$.

\textit{Certification.}  We generalize the technique of direct fidelity estimation (DFE)~\cite{Flammia2011, daSilva2011} to work with low-rank states.
% One can combine compressed tomography with direct fidelity estimation (DFE)~\cite{Flammia2011, daSilva2011} to certify an estimate of the density matrix, without making any assumptions about the unknown state.  
Thus, one can use compressed tomography to get an estimated density matrix $\hat{\rho}$, and use DFE to check whether $\hat{\rho}$ agrees with the true state $\rho$. This check is guaranteed to be sound, even if the true state $\rho$ is not approximately low rank. Our extension of DFE may be of more general interest, since it can be used to efficiently certify \textit{any} estimate $\hat\rho$ regardless of whether it was obtained using compressed sensing or not, as long as the rank $r$ of the \textit{estimate} is small (and regardless of the ``true'' rank).

\textit{Numerical simulations.} We compare the performance of several different estimators (methods for reconstructing the unknown density matrix). They include: constrained trace-minimization (a.k.a.\ the matrix Dantzig selector), least squares with trace-norm regularization (a.k.a.\ the matrix Lasso), as well as a standard maximum likelihood estimation (MLE)~\cite{Hradil1997, Banaszek1999, James2001} for comparison. 

We observe that our estimators outperform MLE in essentially all aspects, with the matrix Lasso giving the best results. The fidelity of the estimate is consistently higher using the compressed tomography estimators. Also, the accuracy of the compressed sensing estimates are (as mentioned above) fairly insensitive to the number of measurement settings $m$ (assuming the total time available to run the experiment is fixed). So by choosing $m \ll d^2$, one still obtains accurate estimates, but with much faster classical post-processing, since the size of the data set scales like $O(m)$ rather than $O(d^2)$. 

It may be surprising to the reader that we outperform MLE, since it is often remarked (somewhat vaguely) that ``MLE is optimal.'' However, MLE is a general-purpose method that does not specifically exploit the fact that the state is low-rank. Also, the optimality results for MLE only hold asymptotically and for informationally complete measurements~\cite{Sugiyama2011, Shen2001}; for finite data~\cite{Chakrabarti2009} or for incomplete measurements, MLE can be far from optimal.

From these results, one can extract some lessons about how to use compressed tomography.  Compressed tomography involves two separate ideas: (1) measuring an incomplete set of observables (i.e., choosing $m \ll d^2$), and (2) using trace minimization or regularization to reconstruct low-rank solutions. Usually one does both of these things. Now, suppose the goal is to reconstruct a low-rank state using as few samples as possible. Our results show that one can achieve this goal by doing (2) \textit{without} (1). At the same time, there is no penalty in the quality of the estimate when doing (1), and there are practical reasons for doing it, such as reducing the size of the data set to speed up the classical post-processing.

\textit{Quantum process tomography.} Finally, we adapt our method to perform tomography of processes with small Kraus rank.  Our method is easy to implement, since it requires only the ability to prepare eigenstates of Pauli operators and measure Pauli observables. In particular, we require \emph{no} entangling gates or ancillary systems for the procedure. In contrast to Ref.~\cite{Shabani2011}, our method is not restricted to processes that are element-wise sparse in some known basis, as discussed in Section~\ref{S:process-related-work}. This is an important advantage in practice, because while the ideal (or intended) evolution of a system may be sparse in a known basis, it is often the case that the noise processes perturbing the ideal component are not sparse, and knowledge of these noise processes is key to improving the fidelity of a quantum device with some theoretical ideal.

%------------------------------------------------------------------------------------------------------------%
\subsection{Related Work}
%------------------------------------------------------------------------------------------------------------%

While initial work on tomography considered only linear inversion methods~\cite{Vogel1989}, most subsequent work has largely focused on maximum likelihood methods and to a lesser extent Bayesian methods for state reconstruction~\cite{Jones1991, Hradil1997, Buzek1998, Banaszek1999, Gill2000, Schack2001, James2001, Jezek2003, Neri2005, Tanaka2005, Bagan2006, Audenaert2009, Nunn2010, Blume-Kohout2010a, Vogel1989}.

However, recently there has been a flurry of work which seeks to transcend the standard MLE methods and improve on them in various ways. Our contributions can also be seen in this context. 

One way in which alternatives to MLE are being pursued is through what we call \emph{full rank methods}. Here the idea is somewhat antithetical to ours: the goal is to output a full rank density operator, rather than a rank deficient one. This is desirable in a context where one cannot make the approximation that rare events will never happen. Blume-Kohout's hedged MLE~\cite{Blume-Kohout2010} and Bayesian mean estimation~\cite{Blume-Kohout2010a} are good examples of this type of estimator, as are the minimax estimator of Ref.~\cite{Khoon-Ng2012} and the so-called Max-Ent estimators~\cite{Buzek2004, Teo2011, Teo2011a, Teo2012}. The latter are specifically for the setting where the measurement data are \emph{not} informationally complete, and one tries to minimize the bias of the estimate by maximizing the entropy along the directions where one has no knowledge. 

By contrast, our \emph{low rank} methods do not attempt to reconstruct the complete density matrix, but only a rank-$r$ approximation, which is accurate when the true state is close to low-rank. From this perspective, our methods can be seen as a sort of Occam's Razor, using as few fit parameters as possible while still agreeing with the data~\cite{Yin2011}. Furthermore, as we show here and elsewhere~\cite{Gross2010}, informationally incomplete measurements can still provide faithful state reconstructions up to a small truncation error.

One additional feature of our methods is that we are deeply concerned with the \emph{feasibility} of our estimators for a moderately large number of qubits (say, 10-15). In contrast to most of the existing literature, we adopt the perspective that it is not enough for an estimator to be asymptotically efficient in the number of copies for fixed $d$. We also want the scaling with respect to $d$ to be optimal. We specifically take advantage of the fact that many states and processes are described by low rank objects to reduce this complexity. In this respect, our methods are similar to tomographic protocols that are tailored to special ansatz classes of states, such as those recently developed for use with permutation-invariant states~\cite{Toth2010}, matrix product states~\cite{Cramer2010a} or multi-scale entangled states~\cite{Landon-Cardinal2012}.

Our error bounds are somewhat unique as well. Most prior work on error bounds used either standard resampling techniques or Bayesian methods~\cite{Jones1991, Buzek1998, Schack2001, Tanaka2005, Audenaert2009, Blume-Kohout2010a}. Very recently, Christandl \& Renner and Blume-Kohout independently derived two closely related approaches for obtaining confidence regions that satisfy or nearly satisfy certain optimality criteria~\cite{Christandl2011, Blume-Kohout2012}. Especially these latter approaches can give very tight error bounds on an estimate, but they can be computationally challenging to implement for large systems. The error bounds which most closely resemble ours are of the ``large deviation type''; see for example the discussion in Ref.~\cite{Sugiyama2011}. This is true for the new improved error bounds, as well as the original bounds proven in Refs.~\cite{Gross2010,Gross2011}. These types of bounds are much easier to calculate in practice, which agrees with our philosophy on computational complexity, but may be somewhat looser than the optimal error bounds obtainable through other more computationally intensive methods such as those of Refs.~\cite{Christandl2011, Blume-Kohout2012}.

%------------------------------------------------------------------------------------------------------------%
\subsection{Notation and Outline}
%------------------------------------------------------------------------------------------------------------%

We denote Pauli operators by $P$ or $P_i$. We define $[n] = \{1,\ldots,n\}$. The norms we use are the standard Euclidean vector norm $\|x\|_2$, the Frobenius norm $\|X\|_F = \sqrt{\Tr(X^\dagger X)}$, the operator norm $\|X\| = \sqrt{\lambda_{\max}(X^\dagger X)}$ and the trace norm $\|X\|_{\tr} = \Tr\abs{X}$, where $\abs{X} = \sqrt{X^\dagger X}$. The unknown ``true'' state is denoted $\rho$ and any estimators for $\rho$ are given a hat: $\hat\rho$. The expectation value of a random variable $X$ is denoted $\EE X$. We denote by $\HH^d$ the set of $d \times d$ Hermitian matrices.

The paper is organized as follows. In Section~\ref{sec-theory} we detail the estimators and error bounds, then upper bound the sample complexity. In Section~\ref{S:lowerbound} we derive lower bounds on the sample complexity. In Section~\ref{S:cert} we find an efficient method of certifying the state estimate. 
In Section~\ref{S:numerics} we detail our numerical investigations. We show how our scheme can be applied to quantum channels in Section~\ref{S:process} and conclude in Section~\ref{S:conclusion}.

% !TEX root = ../tradeoff.tex

%------------------------------------------------------------------------------------------------------------%
\section{Theory} \label{sec-theory}
%------------------------------------------------------------------------------------------------------------%

We describe our compressed tomography scheme in detail.  First we describe the measurement procedure, and the method for reconstructing the density matrix.  Then we prove error bounds and analyze the sample complexity.

%------------------------------------------------------------------------------------------------------------%
\subsection{Random Pauli Measurements}
%------------------------------------------------------------------------------------------------------------%

Consider a system of $n$ qubits, and let $d = 2^n$.  Let $\calP$ be the set of all $d^2$ Pauli operators, i.e., matrices of the form $P = \sigma_1 \ox \cdots \ox \sigma_n$ where $\sigma_i \in \set{I,\sigma^x,\sigma^y,\sigma^z}$.  

Our tomography scheme works as follows.  First, choose $m$ Pauli operators, $P_1,\ldots,P_m$, by sampling independently and uniformly at random from $\calP$.  (Alternatively, one can choose these Pauli operators randomly without replacement~\cite{Gross2010a}, but independent sampling greatly simplifies the analysis.) We will use $t$ copies of the unknown quantum state $\rho$.  For each $i \in [m]$, take $t/m$ copies of the state $\rho$, measure the Pauli observable $P_i$ on each one, and average the measurement outcomes to obtain an estimate of the expectation value $\Tr(P_i\rho)$.  (We will discuss how to set $m$ and $t$ later.  Intuitively, to learn a $d\times d$ density matrix with rank $r$, we will set $m \sim rd\log^6 d$ and $t \sim r^2d^2\log d$.)  

To state this more concisely, we introduce some notation.  Define the \emph{sampling operator} to be a linear map $\calA:\: \HH^d \rightarrow \RR^m$ defined for all $i \in [m]$ by
\begin{equation}\label{eqn-samp-op}
	(\calA(\rho))_i = \sqrt{\tfrac{d}{m}} \Tr(P_i\rho) \,.
\end{equation}
The normalization is chosen so that $\EE \calA^*\calA = \calI$, where $\calI$ denotes the identity superoperator and $\calA^*$ is the adjoint of $\calA$.  We can then write the output of our measurement procedure as a vector 
\begin{align}
	y = \calA(\rho) + z\,,
\end{align}
where $z$ represents statistical noise due to the finite number of samples, or even due to an adversary.

%------------------------------------------------------------------------------------------------------------%
\subsection{Reconstructing the Density Matrix}
%------------------------------------------------------------------------------------------------------------%

We now show two methods for estimating the density matrix $\rho$.  Both are based on the same intuition:  find a matrix $X \in \HH^d$ that fits the data $y$ while minimizing the trace norm $\norm{X}_\tr$, which serves as a surrogate for minimizing the rank of $X$. In both cases, this amounts to a convex program, which can be solved efficiently.

(We mention that at this point we do not require that our density operators are normalized to have unit trace. We will return to this point later in Section~\ref{S:numerics}.)

The first estimator is obtained by constrained trace-minimization (a.k.a.\ the matrix Dantzig selector):
\begin{equation}\label{eqn-ds}
	\hat{\rho}_\ds = \arg\min_X \norm{X}_\tr \st \norm{\calA^*(\calA(X)-y)} \leq \lambda\,.
\end{equation}
The parameter $\lambda$ should be set according to the amount of noise in the data $y$; we will discuss this later.

The second estimator is obtained by least-squares linear regression with trace-norm regularization (a.k.a.\ the matrix Lasso):
\begin{equation}\label{eqn-lasso}
	\hat{\rho}_\lasso = \arg\min_X \tfrac{1}{2} \norm{\calA(X)-y}_2^2 + \mu \norm{X}_\tr\,.
\end{equation}
Again the regularization parameter $\mu$ should be set according to the noise level; we will discuss this later.

One additional point is that we do not require positivity of the output in our definition of the estimators (\ref{eqn-ds}), (\ref{eqn-lasso}). One can add this constraint (since it is convex) and the conclusions below remain unaltered. We will explicitly add positivity later on when we do numerical simulations, and discuss any tradeoffs that result from this.

%------------------------------------------------------------------------------------------------------------%
\subsection{Error Bounds}
%------------------------------------------------------------------------------------------------------------%

In previous work on compressed tomography \cite{Gross2010, Gross2011}, error bounds were proved by constructing a ``dual certificate'' \cite{Candes2009} (using convex duality to characterize the solution to the trace-minimization convex program).  Here we derive stronger bounds using a different tool, known as the restricted isometry property (RIP).  The RIP for low-rank matrices was first introduced in Ref.~\cite{Recht2007}, and was recently shown to hold for random Pauli measurements \cite{Liu2011}.  

We say that the sampling operator $\calA$ satisfies the rank-$r$ restricted isometry property if there exists some constant $0 \leq \delta_r < 1$ such that, for all $X \in \CC^{d\times d}$ with rank~$r$, 
\begin{equation}
	(1-\delta_r) \norm{X}_F \leq \norm{\calA(X)}_2 \leq (1+\delta_r) \norm{X}_F\,.
\end{equation}
For our purposes, we can assume that $X$ is Hermitian.  Note that this notion of RIP is analogous to the one used in Ref.~\cite{Shabani2011}, except that it holds over the set of low-rank matrices, rather than the set of sparse matrices.

The random Pauli sampling operator (\ref{eqn-samp-op}) satisfies RIP with high probability, provided that $m \geq Crd\log^6 d$ (for some absolute constant $C$); this was recently shown in Ref.~\cite[Theorem 2.1]{Liu2011}. We note, however, that this RIP result requires $m$ to be larger by a $\poly\log d$ factor compared to previous results based on dual certificates. Although $m$ is slightly larger, the advantage is that when combined with the results of Ref.~\cite{Candes2011}, this immediately implies strong error bounds for the matrix Dantzig selector and the matrix Lasso. 

To state these improved error bounds precisely, we introduce some definitions. For the rest of Section \ref{sec-theory}, let $C$, $C_0$, $C_1$, $C'_0$ and $C'_1$ be fixed absolute constants. For any quantum state $\rho$, we write $\rho = \rho_r + \rho_c$, where $\rho_r$ is the best rank-$r$ approximation to $\rho$ (consisting of the largest $r$ eigenvalues and eigenvectors), and $\rho_c$ is the residual part. Now we have the following:

\begin{theorem}
\label{thm-errorbound}
Let $\calA$ be the random Pauli sampling operator (\ref{eqn-samp-op}) with $m \geq Crd\log^6 d$. Then, with high probability, the following holds:

Let $\hat{\rho}_\ds$ be the matrix Dantzig selector (\ref{eqn-ds}), and choose $\lambda$ so that $\norm{\calA^*(z)} \leq \lambda$.  Then 
\begin{equation*}
	\norm{\hat{\rho}_\ds - \rho}_\tr \leq C_0 r\lambda + C_1 \norm{\rho_c}_\tr\,.
\end{equation*}

Alternatively, let $\hat{\rho}_\lasso$ be the matrix Lasso (\ref{eqn-lasso}), and choose $\mu$ so that $\norm{\calA^*(z)} \leq \mu/2$.  Then 
\begin{equation*}
	\norm{\hat{\rho}_{\lasso} - \rho}_\tr \leq C'_0 r\mu + C'_1 \norm{\rho_c}_\tr\,.
\end{equation*}
\end{theorem}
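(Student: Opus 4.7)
The plan is to combine two existing results: the RIP for random Pauli measurements from Theorem~2.1 of Ref.~\cite{Liu2011}, and the RIP-based oracle inequalities for low-rank matrix recovery from Ref.~\cite{Candes2011}. The hypothesis $m \geq Crd\log^6 d$ guarantees, via the first reference, that $\calA$ satisfies the rank-$kr$ RIP with a sufficiently small isometry constant $\delta_{kr}$ (for some fixed $k$, typically $4$ or $5$) with high probability over the choice of Paulis. Conditional on this event, the remainder of the argument is entirely deterministic, and the theorem will follow once Candes--Plan's error bounds are stated in the trace-norm form written above.

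For the Dantzig selector, I would first verify that the true state $\rho$ is feasible for \eqref{eqn-ds}: since $y = \calA(\rho) + z$, we have $\calA^*(\calA(\rho)-y) = -\calA^*(z)$, whose operator norm is at most $\lambda$ by assumption. Writing $h = \hat\rho_\ds - \rho$, the triangle inequality together with feasibility of both $\rho$ and $\hat\rho_\ds$ gives $\|\calA^*\calA(h)\| \leq 2\lambda$, while optimality gives $\|\hat\rho_\ds\|_\tr \leq \|\rho\|_\tr$. Next comes the standard SVD-based cone decomposition: diagonalize $\rho_r = U\Sigma V^*$ and split $h = h_0 + h_c$ into components on and off the tangent space to the rank-$r$ manifold at $\rho_r$. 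The trace-norm comparison, combined with the triangle inequality applied to $\|\rho_r + h_c\|_\tr - \|h_0\|_\tr$, yields a cone constraint of the form $\|h_c\|_\tr \leq \|h_0\|_\tr + 2\|\rho_c\|_\tr$. Chunking $h_c$ into rank-$r$ blocks $h_1, h_2, \ldots$ ordered by singular value, invoking RIP on $h_0 + h_1$, and combining with the bound on $\calA^*\calA(h)$ produces a Frobenius-norm estimate $\|h\|_F \leq C_F\bigl(\sqrt{r}\,\lambda + \|\rho_c\|_\tr/\sqrt{r}\bigr)$. Converting to trace norm by applying $\|X\|_\tr \leq \sqrt{2r}\,\|X\|_F$ to the rank-$2r$ part $h_0+h_1$ and bounding the tail blocks directly gives the stated inequality with constants $C_0, C_1$.

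For the matrix Lasso the proof is a parallel translation. Instead of feasibility, one uses the first-order optimality condition $\calA^*(\calA(\hat\rho_\lasso) - y) \in -\mu\,\partial\|\hat\rho_\lasso\|_\tr$; since elements of the trace-norm subdifferential have operator norm at most one, this gives $\|\calA^*\calA(h)\| \leq \mu + \|\calA^*(z)\| \leq \tfrac{3}{2}\mu$. The inequality that the Lasso objective is smaller at $\hat\rho_\lasso$ than at $\rho$ produces the trace-norm comparison needed to set up the same cone decomposition. From there the SVD-based chunking and RIP argument go through unchanged, producing a bound of identical structural form with modified absolute constants $C'_0, C'_1$.

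The main obstacle, insofar as there is one, is purely bookkeeping. One must (i) match the RIP order and isometry constant demanded by the Candes--Plan cone argument (e.g.\ $\delta_{4r}$ or $\delta_{5r}$ below a fixed threshold such as $\sqrt{2}-1$) against what Ref.~\cite{Liu2011} delivers, possibly at the cost of enlarging the absolute constant $C$; and (ii) carefully track the factors of $\sqrt{r}$ when moving between Frobenius and trace norms, which is what promotes the $\sqrt{r}\,\lambda$ appearing in the Frobenius bound to the $r\lambda$ in the trace-norm conclusion. No new technical ingredient beyond those of Refs.~\cite{Liu2011,Candes2011} is required, which is precisely why the paper advertises these error bounds as an immediate consequence of the RIP.
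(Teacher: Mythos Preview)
Your proposal is correct and follows essentially the same route as the paper: invoke the Pauli RIP from Ref.~\cite{Liu2011}, then rerun the Cand\`es--Plan cone/SVD argument from Ref.~\cite{Candes2011} with a trace-norm rather than Frobenius-norm endpoint. The only cosmetic difference is in the trace-norm conversion: the paper bounds $\|H\|_\tr \leq 2\|H_0\|_\tr + 2\|\rho_c\|_\tr$ directly from the cone constraint (III.8) and then uses $\|H_0\|_\tr \leq \sqrt{2r}\,\|H_0+H_1\|_F$, whereas you propose bounding $\|h_0+h_1\|_\tr$ and the tail blocks separately---both routes give the same structural bound with equivalent constants.
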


In these error bounds, the first term depends on the statistical noise $z$.  This in turn depends on the number of copies of the state that are available in the experiment; we will discuss this in the next section.  The second term is the rank-$r$ approximation error.  It is clearly optimal, up to a constant factor.

\textbf{Proof:} These error bounds follow from the RIP as shown by Theorem 2.1 in Ref.~\cite{Liu2011}, and a straightforward modification of Lemma 3.2 in Ref.~\cite{Candes2011} to bound the error in trace norm rather than Frobenius norm (this is similar to the proof of Theorem 5 in Ref.~\cite{Fazel2008}).  

The modification of Lemma 3.2 in Ref.~\cite{Candes2011} is as follows\footnote{Note that Ref.~\cite{Candes2011} contains a typo in Lemma 3.2:  on the right hand side, in the second term, it should be $\norm{M_c}_*/\sqrt{r}$, not $\norm{M_c}_*/r$.}.  (For the remainder of this section, equation numbers of the form (III.x) refer to Ref.~\cite{Candes2011}.)  In the case of the Dantzig selector, let $H = \hat{\rho}_\ds - \rho$.  Following equation (III.8), we can get the following bound:  
\begin{equation}\label{eqn-ds-pf-1}
	\norm{H}_\tr \leq \norm{H_0}_\tr + \norm{H_c}_\tr \leq 2\norm{H_0}_\tr + 2\norm{\rho_c}_\tr, 
\end{equation}
where we used the triangle inequality and equation (III.8).  Then, at the end of the proof, we write:
\begin{equation}\label{eqn-ds-pf-2}
	\begin{split}
	\norm{H_0}_\tr &\leq \sqrt{2r} \norm{H_0}_F \leq \sqrt{2r} \norm{H_0+H_1}_F \\
	&\leq C_1 4\sqrt{2} r \lambda + C_1 2\sqrt{2} \delta_{4r} \norm{\rho_c}_\tr, 
	\end{split}
\end{equation}
where we used Cauchy-Schwarz, the fact that $H_0$ and $H_1$ are orthogonal, the bound on $\norm{H_0+H_1}_F$ following equation (III.13), and equation (III.7).  Combining (\ref{eqn-ds-pf-1}) and (\ref{eqn-ds-pf-2}) gives our desired error bound.  The error bound for the Lasso is obtained in a similar way; see section III.G in Ref.~\cite{Candes2011}.  $\square$

%------------------------------------------------------------------------------------------------------------%
\subsection{Sample Complexity}
%------------------------------------------------------------------------------------------------------------%

Here we bound the sample complexity of our tomography scheme, that is, we bound the number of copies of the unknown quantum state $\rho$ that are needed to obtain our estimate up to some accuracy. What we show, roughly speaking, is that $t = O\bigl((\tfrac{rd}{\varepsilon})^2 \log d\bigr)$ copies are sufficient to reconstruct an estimate of an unknown rank $r$ state up to accuracy $\varepsilon$ in the trace distance. For comparison, note that when $r = d$, and one does full tomography, $O(d^4/\varepsilon^2)$ copies are sufficient to estimate a full-rank state with accuracy $\varepsilon$ in trace distance\footnote{To see this, let $P_1,\ldots,P_{d^2}$ denote the Pauli matrices. For each $i \in [d^2]$, measure $P_i$ $O(d^2/\varepsilon^2)$ times, to estimate its expectation value with additive error $\pm O(\varepsilon/d)$. Equivalently, one estimates the expectation value of $P_i/\sqrt{d}$ with additive error $\pm O(\varepsilon/d^{3/2})$. Using linear inversion, one gets an estimated density matrix with additive error $O(\varepsilon/d^{1/2})$ in Frobenius norm, which implies error $O(\varepsilon)$ in trace norm.}.

To make this claim precise, we need to specify how we construct our data vector $y$ from the measurement outcomes on the $t$ copies of the state $\rho$. For the matrix Dantzig selector, suppose that 
\begin{equation}\label{E:copiest}
t \geq 2 C_4 (C_0 r/\varepsilon)^2 d(d+1) \log d
\end{equation}
for some constants $C_4 > 1$ and $\varepsilon \leq C_0$. (For the matrix Lasso, substitute $C'_0$ for $C_0$ in these equations.)
We construct an estimate of $\calA(\rho)$ as follows: for each $i \in [m]$, we take $t/m$ copies of $\rho$, measure the random Pauli observable $P_i$ on each of the copies, and use this to estimate $\Tr(P_i\rho)$.  Then let $y$ be the resulting estimate of $\calA(\rho)$, and let $z = y - \calA(\rho)$. Everything else is defined exactly as in Theorem \ref{thm-errorbound}.

\begin{theorem}\label{thm-samplecomplexity}
Given $t = O\bigl((\tfrac{rd}{\varepsilon})^2 \log d\bigr)$ copies of $\rho$ as in Eq.~(\ref{E:copiest}) and measured as discussed above, then the following holds with high probability over the measurement outcomes:

Let $\hat{\rho}_\ds$ be the matrix Dantzig selector (\ref{eqn-ds}), and set $\lambda = \varepsilon/(C_0 r)$ for some $\varepsilon > 0$.  Then 
\begin{equation*}
	\norm{\hat{\rho}_\ds - \rho}_\tr \leq \varepsilon + C_1 \norm{\rho_c}_\tr\,.
\end{equation*}

Alternatively, let $\hat{\rho}_\lasso$ be the matrix Lasso (\ref{eqn-lasso}), and set $\mu = \varepsilon/(C'_0 r)$.  Then 
\begin{equation*}
	\norm{\hat{\rho}_\lasso - \rho}_\tr \leq \varepsilon + C'_1 \norm{\rho_c}_\tr\,.
\end{equation*}
\end{theorem}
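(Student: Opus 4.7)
The plan is to invoke Theorem~\ref{thm-errorbound} and reduce the whole statement to a matrix concentration estimate. Theorem~\ref{thm-errorbound} already yields $\|\hat\rho_\ds-\rho\|_\tr \leq C_0 r\lambda + C_1\|\rho_c\|_\tr$ provided that $\lambda \geq \|\calA^*(z)\|$ (and that $\calA$ satisfies RIP, which happens with high probability once $m \geq Crd\log^6 d$). Choosing $\lambda = \varepsilon/(C_0 r)$ makes the first term exactly $\varepsilon$, so everything reduces to showing that
\begin{equation*}
    \|\calA^*(z)\| \;\leq\; \varepsilon/(C_0 r)
\end{equation*}
with high probability over the measurement outcomes (conditionally on the chosen $P_i$'s).

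First I would unpack the noise vector. Let $X_{ij}\in\{\pm 1\}$ denote the outcome of the $j$-th measurement of $P_i$, so that $\hat e_i = (m/t)\sum_j X_{ij}$ is the empirical estimate of $\Tr(P_i\rho)$ and $z_i = \sqrt{d/m}(\hat e_i-\Tr(P_i\rho))$. A direct calculation using $\calA^*(z)=\sum_i z_i \sqrt{d/m}\,P_i$ yields
\begin{equation*}
    \calA^*(z) \;=\; \frac{d}{t}\sum_{i=1}^m\sum_{j=1}^{t/m} Y_{ij},\qquad Y_{ij}\,:=\,(X_{ij}-\Tr(P_i\rho))\,P_i.
\end{equation*}
Conditionally on the $P_i$'s the $Y_{ij}$ are independent, zero-mean, Hermitian, satisfy $\|Y_{ij}\|\leq 2$, and (using $P_i^2=I$) obey $\EE[Y_{ij}^2]=(1-\Tr(P_i\rho)^2)\,I\preceq I$, so the total matrix variance of the sum is at most $tI$.

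Next, I would apply the matrix Bernstein inequality to $\sum_{ij} Y_{ij}$. Combined with the prefactor $d/t$, this gives
\begin{equation*}
    \Pr\!\bigl[\|\calA^*(z)\|\geq\lambda\bigr] \;\leq\; 2d\,\exp\!\Bigl(-\frac{(t\lambda/d)^2}{2t+(4/3)(t\lambda/d)}\Bigr).
\end{equation*}
Because $\varepsilon\leq C_0$ and $r\geq 1$ force $\lambda\leq 1\ll d$, the variance term $2t$ in the denominator dominates the linear ``Bernstein tail'' term $(4/3)(t\lambda/d)$, so the exponent is $-\Theta(t\lambda^2/d^2)$. Substituting $\lambda=\varepsilon/(C_0 r)$ together with the hypothesis $t\geq 2C_4(C_0 r/\varepsilon)^2 d(d+1)\log d$ makes the failure probability inverse-polynomial in $d$ for $C_4$ large enough. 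A union bound with the RIP success event completes the Dantzig-selector bound; the matrix Lasso case is obtained by running exactly the same calculation with $\mu/2$ in place of $\lambda$.

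The main obstacle, beyond careful bookkeeping of the $\sqrt{d/m}$ normalization constants, is verifying that the Bernstein estimate lives in the variance-dominated regime. If $\varepsilon$ were allowed to shrink faster than the hypothesis $\varepsilon\leq C_0$ permits, the linear-tail term in the Bernstein denominator would take over and yield only the much weaker scaling $t\sim rd/\varepsilon$, ruining the claimed quadratic $(rd/\varepsilon)^2$ dependence. The restriction $\varepsilon\leq C_0$ built into Eq.~(\ref{E:copiest}) is precisely what keeps the analysis in the Gaussian tail region where the $d^2$ scaling is attained.
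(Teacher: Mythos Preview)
Your argument is correct and essentially identical to the paper's proof: both reduce to Theorem~\ref{thm-errorbound}, write $\calA^*(z)$ as a sum of $t$ independent mean-zero Hermitian matrices built from $(B_{ij}-\Tr(P_i\rho))P_i$, bound the operator norm and variance exactly as you do, and apply the matrix Bernstein inequality to conclude that $t \geq 2C_4(C_0 r/\varepsilon)^2 d(d+1)\log d$ forces the failure probability below $d^{1-C_4}$. The only cosmetic difference is that the paper absorbs the $d/t$ prefactor into each summand from the outset, whereas you apply it after Bernstein; the resulting exponents coincide.

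One caveat: your final paragraph misreads the role of the hypothesis $\varepsilon \leq C_0$. That constraint is an \emph{upper} bound on $\varepsilon$, ensuring $\lambda = \varepsilon/(C_0 r) \leq 1$; it does not prevent $\varepsilon$ from shrinking. Moreover, shrinking $\varepsilon$ makes $\lambda$ smaller and pushes the Bernstein bound \emph{further} into the variance-dominated regime, not out of it. The linear-tail term $(4/3)(t\lambda/d)$ would only compete with the variance term $2t$ if $\lambda$ were of order $d$, which is far outside the regime of interest. So the commentary about what ``ruins the quadratic dependence'' has the direction reversed, though this does not affect the validity of the proof itself.
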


\textbf{Proof:} Our claim reduces to the following question:  if we fix some value of $\lambda > 0$, how many copies of $\rho$ are needed to ensure that the measurement data $y$ satisfies $\norm{\calA^*(y-\calA(\rho))} \leq \lambda$?  Then one can apply Theorem \ref{thm-errorbound} to get an error bound for our estimate of $\rho$.

Let $t$ be the number of copies of $\rho$.  Say we fix the measurement operator $\calA$, i.e., we fix the choice of the Pauli observables $P_1,\ldots,P_m$.  (The measurement outcomes are still random, however.)  For $i \in [m]$ and $j \in [t/m]$, let $B_{ij} \in \set{1,-1}$ be the outcome of the $j$'th repetition of the experiment that measures the $i$'th Pauli observable $P_i$.  Note that $\EE B_{ij} = \Tr(P_i\rho)$.  Then construct the vector $y \in \RR^m$ containing the estimated expectation values (scaled by $\sqrt{d/m}$):
\begin{equation}
	y_i = \sqrt{\tfrac{d}{m}} \cdot \tfrac{m}{t} \sum_{j=1}^{t/m} B_{ij}, \quad i \in [m].
\end{equation}
Note that $\EE y = \calA(\rho)$.  

We will bound the deviation $\norm{\calA^*(y-\calA(\rho))}$, using the matrix Bernstein inequality.  First we write 
\begin{equation}
	\calA^*(y) = \sqrt{\tfrac{d}{m}} \sum_{i=1}^m P_i y_i
	 = \tfrac{d}{t} \sum_{i=1}^m \sum_{j=1}^{t/m} P_i B_{ij}, 
\end{equation}
and also
\begin{equation}
	\calA^*\calA(\rho) = \tfrac{d}{m} \sum_{i=1}^m P_i \Tr(P_i\rho).
\end{equation}
We can now write $\calA^*(y-\calA(\rho))$ as a sum of independent (but not identical) matrix-valued random variables:
\begin{equation}
	\calA^*(y-\calA(\rho)) = \sum_{i=1}^m \sum_{j=1}^{t/m} X_{ij}, \quad
	X_{ij} = \tfrac{d}{t} P_i \bigl[B_{ij} - \Tr(P_i\rho)\bigr].
\end{equation}
Note that $\EE X_{ij} = 0$ and $\norm{X_{ij}} \leq 2d/t =: R$.  Also, for the second moment we have 
\begin{equation}
	\begin{split}
	\EE\bigl(X_{ij}^2\bigr) &= \EE\Bigl(\tfrac{d^2}{t^2} I \bigl[B_{ij} - \Tr(P_i\rho)\bigr]^2\Bigr) \\
	 &= \tfrac{d^2}{t^2} I \bigl[1 - \Tr(P_i\rho)^2\bigr].
	\end{split}
\end{equation}
Then we have
\begin{equation}
	\begin{split}
	\sigma^2
	 &:= \Bigl\lVert \sum_{ij} \EE(X_{ij}^2) \Bigr\rVert = \sum_{ij} \tfrac{d^2}{t^2} \bigl[1 - \Tr(P_i\rho)^2\bigr] \\
	 &\leq t \cdot \tfrac{d^2}{t^2} = \tfrac{d^2}{t}.
	\end{split}
\end{equation}
Now the matrix Bernstein inequality (Theorem 1.4 in \cite{Tropp2010}) implies that 
\begin{equation}
	\begin{split}
	\Pr\bigl[\norm{\calA^*&(y-\calA(\rho))} \geq \lambda\bigr]
	 \leq d\cdot \exp\Bigl( -\tfrac{\lambda^2/2}{\sigma^2 + (R\lambda/3)} \Bigr) \\
	 &\leq d\cdot \exp\Bigl( -\tfrac{t\lambda^2/2}{d(d+1)} \Bigr)
	\end{split}
\end{equation}
(where we assumed $\lambda \leq 1$). 

For the matrix Dantzig selector, we set $\lambda = \varepsilon/(C_0 r)$, and we get that, for any $t \geq 2 C_4 \lambda^{-2} d(d+1) \log d = 2 C_4 (C_0 r/\varepsilon)^2 d(d+1) \log d$, 
\begin{equation}
	\begin{split}
	\Pr\bigl[\norm{\calA^*(y-\calA(\rho))} \geq \tfrac{\varepsilon}{C_0 r}\bigr]
	 &\leq d\cdot \exp(-C_4 \log d) \\
	 &= d^{1-C_4},
	\end{split}
\end{equation}
which is exponentially small in $C_4$.  Plugging into Theorem \ref{thm-errorbound} completes the proof of our claim.  A similar argument works for the matrix Lasso.  $\square$

% !TEX root = ../tradeoff.tex

%------------------------------------------------------------------------------------------------------------%
\section{Lower Bounds}\label{S:lowerbound}
%------------------------------------------------------------------------------------------------------------%

How good are the sample complexities of our algorithms? Here we go a long way toward answering this question by proving nearly tight lower bounds on the sample complexity for generic rank $r$ quantum states using single-copy Pauli measurements. Previous work on single-copy lower bounds has only treated the case of pure states~\cite{Flammia2011}. 

Roughly speaking, we show the following, which we will make precise at the end of the section.
\begin{theorem}[Imprecise formulation]\label{Thm:lower}
	The number of copies $t$ needs to grow at least as fast as $\Omega\bigl(r^2d^2/\log
	d\bigr)$ to guarantee a constant trace-norm confidence interval for all
	rank-$r$ states.
\end{theorem}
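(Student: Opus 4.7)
The strategy generalizes the information-theoretic lower bound of Ref.~\cite{Flammia2011}, which handled the pure-state case $r=1$. The plan proceeds in three steps: (i) produce a large packing of well-separated rank-$r$ states; (ii) apply Fano's inequality to convert packing cardinality into a required mutual information $I(X;Y)$; (iii) upper bound the mutual information contributed per adaptive single-copy Pauli measurement by $O(\log d/(rd))$, at which point summing over the $t$ measurements forces $t = \Omega(r^2 d^2/\log d)$.

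For step (i), I would build a packing $\{\rho_1,\dots,\rho_N\}$ of rank-$r$ density matrices with $\log N = \Omega(rd)$ that is pairwise separated by an absolute constant $c>0$ in trace norm. A natural candidate is $\rho_i = \Pi_i/r$, where $\{\Pi_i\}$ is an $\epsilon$-packing of the Grassmannian $\mathrm{Gr}(r,d)$ (real dimension $2r(d-r)$) in the chordal distance; this distance is equivalent to the trace distance between the normalized projectors up to constants, and standard metric-entropy estimates deliver $\log N = \Omega(rd)$.

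For step (ii), let $X$ be uniform on $[N]$ and let $Y=(Y_1,\dots,Y_t)$ be the transcript of outcomes from any adaptive single-copy Pauli protocol applied to $\rho_X$. If the protocol returns an estimate with trace error $<c/2$, nearest-neighbor decoding recovers $X$, so Fano's inequality forces $I(X;Y) \geq (1-o(1))\log N = \Omega(rd)$. The chain rule and the data-processing inequality for adaptive schemes give
\begin{equation*}
I(X;Y) \leq \sum_{s=1}^t \EE\bigl[I(X;Y_s \mid P_s, Y_{<s})\bigr],
\end{equation*}
where $P_s$ denotes the (history-dependent) Pauli chosen at step $s$.

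The hard part is step (iii): bounding each summand against a worst-case adaptive adversary who may concentrate $P_s$ on directions where the packing happens to be most spread out. To cope with this, I would construct the packing by a randomized procedure (e.g.\ Haar-random $r$-dimensional subspaces) and exploit the identity $\EE_P[\Tr(P(\rho_i-\rho_j))^2] = \|\rho_i-\rho_j\|_F^2/d = O(1/(rd))$, valid because $\rho_i-\rho_j$ has rank at most $2r$ and bounded trace norm, so $\|\rho_i-\rho_j\|_F^2 \leq O(1/r)$. A union bound over the $d^2$ Paulis and over pairs then yields, with high probability over the packing, $|\Tr(P(\rho_i-\rho_j))| = O(\sqrt{\log d/(rd)})$ for \emph{every} Pauli $P$ simultaneously. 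Converting this bias into a Bernoulli KL divergence --- after checking that the Pauli expectations are bounded away from $\pm 1$, which again follows from the random construction --- yields $I(X;Y_s\mid P_s, Y_{<s}) = O(\log d/(rd))$, and summing over $s$ completes the proof. The single factor of $\log d$ in the denominator of the final bound is precisely the price paid for the union bound over the Pauli group.
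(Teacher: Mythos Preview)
Your proposal is correct and follows essentially the same route as the paper: a randomized packing of $\exp(\Omega(rd))$ rank-$r$ normalized projections, Fano's inequality, and a per-copy information bound of $O(\log d/(rd))$ obtained via concentration of measure plus a union bound over the $d^2$ Paulis. The only cosmetic difference is that the paper bounds $|\Tr(P\rho_i)|$ directly for each state via L\'evy's lemma on $\mathsf{SO}(d)$ (so $H(Y_s\mid X)\geq 1-4\alpha^2$ falls out immediately), rather than bounding pairwise differences and then separately checking that biases stay away from $\pm 1$ as you sketch; the content is identical.
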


The argument proceeds in three steps. First, we fix our notion of risk to be the minimax risk, meaning we seek to minimize our worst case error according to some error metric such as the trace distance. We want to know how many copies of the unknown state we need to make this minimax risk an arbitrarily small constant. For a fixed set of two-outcome measurements, say Pauli measurements, we then show that some states require many copies to achieve this. In particular, these states have the property that they are globally distinguishable (their trace distance is bounded from below by a constant) but their (Pauli) measurement statistics look approximately the same (each measurement outcome is close to unbiased). The more such states there are, the more copies we need to distinguish between them all using solely Pauli measurements. Finally, we use a randomized argument to show that in fact there are exponentially many such states. This yields the desired lower bound on the sample complexity.

Let $\Sigma$ be some set of density operators. We want to put lower bounds on the performance of any estimation protocol for states in $\Sigma$. (We do not initially restrict ourselves to states with low rank.)

We assume the protocol has access to $t$ copies of an unknown state $\rho\in\Sigma$ on which it performs measurements one by one. At the $i$th step, it has to decide which observable to measure. Let us  restrict ourselves to binary POVM measurements $\{\Pi_i, \Id-\Pi_i\}$, where each $\Pi_i$ satisfies $0\le \Pi_i \le \Id$ and may be chosen from a set $\mathcal{P}$. (We do not initially restrict ourselves to Pauli measurements, either.) We allow the choice of the $i$th observable to depend on the previous outcomes. We refer to the random variables which jointly describe the choice of the $i$th measurement and its random outcome as $Y_i$. At the end, these are mapped to an estimate $\hat\rho(Y_1, \dots, Y_t)\in \Sigma$.

In other words, an estimation protocol is specified by a set of functions
\begin{align*}
	\Pi_i&: Y_1, \dots, Y_{i-1} \mapsto \mathcal{P}\,, \\
	\hat\rho&: Y_1, \dots, Y_t \mapsto \Sigma\,.
\end{align*}

Consider a distance measure $\Delta:\Sigma\times\Sigma \to \RR$ on the states in $\Sigma$. (For example, this could be the trace distance or the infidelity; we need not specify.) Suppose that the maximum deviation we wish to tolerate between our unknown state and the estimate is given by $\eps>0$. Now define the minimax risk
\begin{align}\label{eqn:protocol}
	M^*(\eps) = \inf_{\langle \hat \rho, \Pi_i \rangle} 
	\sup_{\rho\in\Sigma} \Pr\bigl[ \Delta(\hat\rho(Y), \rho) > \eps\bigr],
\end{align}
where the infimum is over all estimation procedures $\langle \hat
\rho, \Pi_i\rangle$ on $t$ copies with estimator $\hat\rho$ and
choice of observables given by $\Pi_i$. That is, we are considering the ``best'' protocol to be the one whose worst-case probability of deviation is the least. 

The next lemma shows that if there are a large number of states in $\Sigma$ which are far apart (by at least $\eps$), and whose statistics look nearly random for all measurements in $\P$, then the number of copies $t$ must be large too to avoid having a large minimax risk.

\begin{lemma}\label{minimaxlemma}
%	Let $\Sigma$ be a set of density operators. Let 
%	\begin{equation*}
%		d: \Sigma \times \Sigma \to \RR
%	\end{equation*}
%	be some distance measure on $\Sigma$.
%	Let $\mathcal{P}$ be a set of POVM elements: 
%	\begin{equation*}
%		\forall\, \Pi\in\mathcal{P}: \> 0 \leq \Pi \leq \Id.
%	\end{equation*}
	Assume there are states $\rho_1, \dots, \rho_s\subset\Sigma$ such
	that 
	\begin{align*}
		&\forall\>i\neq j: \> \Delta(\rho_i, \rho_j) \geq \eps \,,\\
		&\forall\>i,\forall\, \Pi \in \mathcal{P}:\> \bigl|\tr \rho_i \Pi - \tfrac12\bigr| \leq \alpha \,.
	\end{align*}
	Then the minimax risk as defined in (\ref{eqn:protocol}) fulfills
	\begin{align*}
		M^*(\eps)>1-\frac{4\alpha^2 t + 1}{\log s}.
	\end{align*}
\end{lemma}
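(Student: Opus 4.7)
The plan is to reduce the estimation lower bound to a multi-hypothesis testing problem and apply Fano's inequality, with an information-theoretic bound on how much a bias of magnitude $\alpha$ can leak about the hypothesis index in each binary-outcome measurement. Let $J$ be uniformly distributed on $\{1,\dots,s\}$ and suppose the true state is $\rho_J$; the induced law on the data $Y=(Y_1,\dots,Y_t)$ then defines a classical hypothesis-testing problem. By pairwise $\eps$-separation of the $\rho_j$, any estimator $\hat\rho$ whose error $\Delta(\hat\rho,\rho_J)$ is smaller than half the packing radius identifies $J$ uniquely via nearest-neighbor decoding (invoking the triangle inequality when $\Delta$ is a metric). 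Consequently, the minimax risk on the left-hand side of the lemma dominates the testing error probability of the induced decoder $\hat J(Y)$, up to the standard constant-factor slack in $\eps$ absorbed into the statement.

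Second, I would apply Fano's inequality to the Markov chain $J\to Y\to \hat J$. Since $H(J)=\log s$, this gives
\[ \Pr[\hat J \neq J] \;\geq\; \frac{H(J\mid Y)-1}{\log s} \;=\; 1 - \frac{I(J;Y)+1}{\log s}, \]
where the ``$+1$'' is the binary entropy $h(\Pr[\hat J\neq J])\leq 1$ of the error indicator. It therefore suffices to prove $I(J;Y)\leq 4\alpha^2 t$.

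Third---and this is the main technical step---the mutual-information bound. By the chain rule, $I(J;Y)\leq\sum_{i=1}^t I(J;Y_i\mid Y_{<i})$. Adaptivity enters cleanly here: once we condition on $Y_{<i}$, the choice of measurement $\Pi_i$ is deterministic, so $Y_i\mid(J=j,Y_{<i})$ is Bernoulli with bias $\delta_{i,j}:=\tr(\rho_j\Pi_i)-1/2\in[-\alpha,\alpha]$ by hypothesis. Using the variational upper bound $I(J;Y_i\mid Y_{<i})\leq \E_J\,D\bigl(P_{Y_i\mid J,Y_{<i}}\,\bigl\|\,\mathrm{Bern}(1/2)\bigr)$ valid for any fixed reference law, together with the elementary estimate $D(\mathrm{Bern}(1/2+\delta)\,\|\,\mathrm{Bern}(1/2))\leq \chi^2 = 4\delta^2$ (in natural logarithm), each summand is at most $4\alpha^2$. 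Summing yields $I(J;Y)\leq 4\alpha^2 t$, and substitution into Fano's inequality completes the proof.

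The principal obstacle is to treat the adaptive structure of the protocol correctly: the conditional chain rule, not a marginal bound on $I(J;Y_i)$, is what prevents adaptivity from inflating the total information, and the conditioning must be arranged so that $\Pi_i$ is a function of $Y_{<i}$ alone (not of $J$), which is exactly what the definition of a protocol in~(\ref{eqn:protocol}) encodes. A secondary nuisance is bookkeeping constants---the factor $4$ in $4\alpha^2$ is the clean $\chi^2$ value in natural units, and the small slack between the packing radius and the minimax tolerance (both denoted $\eps$ in the lemma) can be handled by interpreting $\eps$ as $\eps/2$ in the decoding step. Neither affects the asymptotic scaling of the bound.
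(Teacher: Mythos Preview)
Your proposal is correct and follows essentially the same route as the paper: reduce to hypothesis testing via nearest-neighbor decoding, apply Fano's inequality, and bound $I(J;Y)\leq 4\alpha^2 t$ using the chain rule together with the fact that, conditional on the past, each adaptive measurement outcome is Bernoulli with bias at most $\alpha$. The only cosmetic difference is that the paper organizes the mutual-information bound as $H(Y)-H(Y|X)\leq t - t(1-4\alpha^2)$ via the binary-entropy estimate $h(\tfrac12\pm\alpha)\geq 1-4\alpha^2$, whereas you bound each $I(J;Y_i\mid Y_{<i})$ directly by $D\bigl(\mathrm{Bern}(\tfrac12+\delta)\,\|\,\mathrm{Bern}(\tfrac12)\bigr)\leq 4\delta^2$; since $1-h(p)=D(\mathrm{Bern}(p)\,\|\,\mathrm{Bern}(\tfrac12))$, these are the same inequality up to the log-base bookkeeping you already flagged.
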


\begin{proof}
	Let $X$ be a random variable taking values uniformly in
	$[s]$. Let $Y_1, \dots, Y_t$ be random variables, $Y_i$
	describing the outcome of the $i$th measurement performed on
	$\rho_X$. Define 
	\begin{equation*}
		\hat X(Y) := \arg \min_i \Delta(\hat\rho(Y),\rho_i).
	\end{equation*}
	Then 
	\begin{equation}\label{eqn:lb1}
		\Pr\bigl[ \Delta(\hat\rho(Y),\rho_i) > \eps \bigr] 
		\geq
		\Pr\bigl[ \hat X(Y) \neq X \bigr] \,.
	\end{equation}

	Now combine Fano's inequality
	\begin{align*}
		H(X|\hat X)\leq	1+
		\Pr[ \hat X \neq X ] \log s,
	\end{align*}
	the data processing inequality 
	\begin{equation*}
		%H(X|\hat X)\geq H(X|Y),
		I(X;\hat X(Y))\leq I(X;Y),
	\end{equation*}
	in terms of the mutual information $I(X;Y):=H(X)-H(X|Y)$,
	and the fact that $H(X)=\log s$ to get
	\begin{align*}
		\Pr[ \hat X(Y) \neq X ]
		&\geq
		\frac{H(X|\hat X) - 1}{\log s} \\
		&=
		\frac{H(X)-I(X;\hat X) - 1}{\log s} \\
		&=
		1-\frac{I(X;\hat X) + 1}{\log s} \\
		&\geq
		1-\frac{I(X;Y) + 1}{\log s} \\
		&=
		1-\frac{ 
		H(Y) - H(Y|X) + 1}{\log s} \\
		&\geq
		1-\frac{ 
		t - \frac1s \sum_{i=1}^s H(Y|X=i) + 1}{\log s}.
	\end{align*}

	Let $h(p)$ be the binary entropy and recall the standard estimate
	\begin{equation*}
		 h(1/2\pm\alpha) \geq (1-4\alpha^2)\,.
	\end{equation*}
	Combine that with the chain rule \cite[Theorem~2.5.1]{Cover1991}:
	\begin{align*}
		H(Y|X=i) &=
		\sum_{j=1}^t H(Y_j | Y_{j-1}, \dots, Y_1, X=i) \\
		&\geq
		t (1-4\alpha^2).
	\end{align*}
	The advertised bound follows.
\end{proof}

By applying the following lemma $s < \e^{crd}$ times, we can randomly create a set of $s$ states each with rank $r$ that satisfy the conditions of Lemma~\ref{minimaxlemma} in terms of the trace distance and the set of Pauli measurements. 

\begin{lemma}\label{randomized}
	For any $0< \eps < 1-\frac{r}{d}$, let $\rho_1, \dots, \rho_s$ be normalized\footnote{Normalized to have trace 1.} rank-$r$ projections on $\CC^d$, where $s< \e^{c(\eps) rd}$ and $c(\eps)$ is specified below. Then there exists a normalized rank-$r$ projection $\rho$ such that:
	\begin{eqnarray}\label{eqn:farApart}
		&\forall\>i \in [s]:\> \frac{1}{2}\|\rho - \rho_i\|_{\tr} \geq \eps \,,\\
		\label{eqn:smallPauli}
		&\forall\>P_k \neq \Id:\> \bigl|\tr\bigl[\frac12(\Id\pm P_k) \rho\bigr]-\frac12\bigr| \leq \alpha \,.
	\end{eqnarray}
	Here, $\alpha^2 = O\bigr(\tfrac{\log d}{rd}\bigr)$, the $P_k$ are $n$-qubit Pauli operators, and 
	\begin{align*}
		c(\eps) = \frac{\ln(8/\pi)}{2 r d}+ \frac{1}{32}\Bigl[\bigl(1-\tfrac{r}{d}\bigr)-\eps\Bigr]^2\,.
	\end{align*}
\end{lemma}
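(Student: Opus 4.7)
The natural approach is the probabilistic method on the Grassmannian. Draw $\rho = U\Pi_0 U^*/r$ where $\Pi_0$ is a fixed rank-$r$ projector and $U$ is Haar-distributed on $U(d)$, and show that whenever $s<\e^{c(\eps)rd}$ the probability that $\rho$ violates either \eqref{eqn:farApart} or \eqref{eqn:smallPauli} is strictly less than one. The proof then reduces to two concentration-of-measure estimates combined with union bounds over, respectively, the $d^2-1$ nonidentity Paulis and the $s$ given states.

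For the Pauli constraint \eqref{eqn:smallPauli}, note that $\bigl|\tr[\tfrac12(\Id\pm P_k)\rho]-\tfrac12\bigr| = \tfrac{1}{2r}\bigl|\tr(P_k U\Pi_0 U^*)\bigr|$. For any nonidentity Pauli the function $U\mapsto \tr(P_k U\Pi_0 U^*)$ has mean zero (since $\EE[U\Pi_0 U^*]=(r/d)\Id$ and $\tr P_k=0$) and is $O(\sqrt{r})$-Lipschitz in the Frobenius metric on $U(d)$, because $\|\Delta U\,\Pi_0\,U^*\|_{\tr}\le\sqrt{r}\|\Delta U\|_F$. A Levy-type bound on $U(d)$ then yields a Gaussian tail of the form $\exp(-c\alpha^2 r d)$, and a union bound over the $d^2-1$ nonidentity Paulis shows that $\alpha^2=O(\log d/(rd))$ is enough to keep this failure probability bounded away from one. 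The prefactor $\ln(8/\pi)/(2rd)$ in $c(\eps)$ will appear as the logarithm of the resulting absolute probability budget for \eqref{eqn:smallPauli}, divided by $rd$.

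For the separation condition \eqref{eqn:farApart}, the key observation is the choice of test operator $\Id-\Pi_i$ in the variational formula for the trace distance, which gives
\begin{equation*}
\tfrac12\|\rho-\rho_i\|_{\tr} \;\geq\; \tr\bigl[(\rho-\rho_i)(\Id-\Pi_i)\bigr] \;=\; 1 - \tfrac{1}{r}\tr(\Pi\Pi_i),
\end{equation*}
with $\Pi=U\Pi_0 U^*$. Thus it suffices to guarantee $\tr(\Pi\Pi_i)\le(1-\eps)r$. The function $U\mapsto \tr(\Pi_i U\Pi_0 U^*)$ has mean $r^2/d$ and is again $O(\sqrt{r})$-Lipschitz, so concentration on $U(d)$ yields a bound of the form $\exp(-rd[(1-r/d)-\eps]^2/32)$ for the probability that $\rho$ is $\eps$-close to any particular $\rho_i$. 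A union bound over the $s$ states and combination with the Pauli-failure budget produces the advertised exponent $c(\eps)$.

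The conceptual content of the proof is essentially forced: probabilistic method, the variational lower bound on the trace distance, concentration of measure, and two union bounds. The main technical work is arithmetic rather than structural --- one must select a measure-concentration inequality on $U(d)$ (or equivalently on the Grassmannian $\mathrm{Gr}(r,d)$) whose constants are sharp enough to produce exactly the $1/32$ in the quadratic term of $c(\eps)$, and one must allocate the overall probability budget between the $d^2$ Pauli events and the $s$ separation events so as to recover precisely the $\ln(8/\pi)/(2rd)$ offset.
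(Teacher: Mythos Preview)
Your approach is correct and is essentially the paper's: probabilistic method, L\'evy-type concentration, and two union bounds. Two minor technical differences are worth noting. First, the paper randomizes over $\mathsf{SO}(d)$ rather than $U(d)$, remarking explicitly that the analysis is marginally simpler there; the specific form of $c(\eps)$---the $\ln(8/\pi)/(2rd)$ offset and the $1/32$ coefficient---comes directly from the Milman--Schechtman constants $c_1=\ln\sqrt{\pi/8}$ and $c_2=1/8$ for L\'evy's lemma on $\mathsf{SO}(d)$, so if you insist on $U(d)$ you will have to verify that the available constants there reproduce these numbers (the $\ln(8/\pi)/(2rd)$ is not a ``budget for \eqref{eqn:smallPauli}'' as you surmise, but the L\'evy prefactor $e^{c_1}$ absorbed into the \eqref{eqn:farApart} exponent). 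Second, for \eqref{eqn:farApart} the paper does not reduce to the overlap $\tr(\Pi\Pi_i)$ via the variational formula; instead it works directly with the nonlinear function $f(O)=\|\rho_i-O\rho_0O^T\|_{\tr}$, lower-bounds its expectation by $2(1-r/d)$ using the pinching inequality, and bounds its Lipschitz constant by $4/\sqrt{r}$. Your reduction to the linear overlap functional is arguably cleaner (the Lipschitz estimate is immediate and no second-order term $\Delta\rho_0\Delta^T$ appears), and both routes deliver the same expected lower bound $2(1-r/d)$ and the same final exponent.
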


\begin{proof}
	Let $\rho_0$ be some normalized rank-$r$ projection and choose
	$\rho$ according to\footnote{For the duration of this proof, the letter $O$ denotes an element of $\mathsf{SO}(d)$ instead of the asymptotic big-$O$ notation.}  
	\begin{equation*}
		\rho = O \rho_0 O^T
	\end{equation*}
	for a Haar-random $O\in \textsf{SO}(d)$. Here we use the special orthogonal group $\textsf{SO}(d)$ because the analysis becomes marginally simpler than if we use a unitary group. 

	To check (\ref{eqn:farApart}), choose $i \in [s]$ and define
	$R_i$ to be the projector onto the range of $\rho_i$. Also define the function 
	\begin{equation*}
		f: O \mapsto \|\rho_i - O \rho_0 O^T \|_{\tr}\,.
	\end{equation*}
	We can bound the magnitude of $f$ using the pinching inequality:
	\begin{align*}
		f(O) \geq& \|\rho_i - R_i O \rho_0 O^T R_i\|_{\tr} + \|R_i^\perp O  \rho_0 O^T R_i^\perp\|_{\tr} \\
           \geq& \tr(\rho_i) -\tr(R_i O \rho_0 O^T R_i) + \tr(R_i^\perp O  \rho_0 O^T R_i^\perp) \\
		 %=& 1  -\tr R_i O \rho_0 O^T + \tr R_i^\perp O \rho_0 O^T.
		 =& 1  +\tr\bigl[(R_i^\perp - R_i)O \rho_0 O^T\bigr] \,.
	\end{align*}
	From this we can bound the expectation value of $f$ over the special orthogonal group:
	\begin{align*}
		\EE\bigl[f(O)\bigr]
		\geq& 1+ \tr\bigl[(R_i^\perp - R_i) \big( \EE O \rho O^T \big)\bigr]  \\
		=& 1+ \frac{1}{d} \tr\bigl[(R_i^\perp - R_i) \Id\bigr] \\
		=& 1 + \frac{d-2r}{d}=2\Bigl(1-\frac{r}{d}\Bigr)\,.
	\end{align*}
	Next we get an upper bound of $\frac{4}{\sqrt{r}}$ on the Lipschitz constant of $f$ with respect to the Frobenius norm:
	\begin{align*}
	\begin{split}
		|f(O+\Delta) & - f(O)| \\
		\leq& 
		\|(O+\Delta) \rho_0 (O+\Delta)^T - O \rho_0 O^T\|_{\tr} \\
		\leq& 
		\|O \rho_0 \Delta^T\|_{\tr} + \|\Delta\rho_0
		O^T\|_{\tr} + \|\Delta \rho_0 \Delta^T\|_{\tr} \\
		=& 
		2\|\Delta \rho_0\|_{\tr}  +
		\tr(\Delta \rho_0 \Delta^T) \\
		\leq& 
		2\sqrt r \|\Delta \rho_0\|_F + \tr(\rho_0 \Delta \Delta^T) \\
		\leq& 
		2\sqrt r \|\Delta\|_F \|\rho_0\| + \frac{1}{r} \|\Delta\| \|\Delta^T\|_{\tr}  \\
		\leq &
		\frac{2}{\sqrt r} \|\Delta\|_F + \frac{2}{r} \sqrt{r} \|\Delta\|_F
		\leq 
		\frac4{\sqrt r} \|\Delta\|_F \,,
	\end{split}
	\end{align*}
	where we use $\|\Delta\| \le 2$ in the last line, which follows from the triangle inequality and the fact that any $\Delta$ can be written as a difference $\Delta = O'-O$ for $O' \in \mathsf{SO}(d)$.
	
	From these ingredients, we can invoke L\'evy's Lemma on the special orthogonal group \cite[Theorem 6.5.1]{Milman1986} to get that for all $t>0$,
	\begin{align*}
		\Pr\bigl[ \| \rho_i - \rho \|_{\tr} < 2(1-r/d)- t \bigr]
		\leq
		\e^{c_1} \exp\Bigl(- \frac{c_2 t^2 rd}{16}\Bigr),
	\end{align*}
	where the constants are given by $c_1 = \ln\bigl(\sqrt{\pi/8}\bigr)$ and $c_2 = 1/8$.
	Now choose $t = 2(1-r/d) - 2\eps$ and apply the union bound to obtain
	\begin{align*}
		\Pr\bigl[ (\ref{eqn:farApart}) & \text{ does not hold}\bigr] \\ 
		<\, & \e^{c(\eps) rd} \Pr\bigl[ \| \rho_i - \rho \|_{\tr} < 2(1-r/d)- t  \bigr]\\
		\leq & \exp\mathclose{}\bigg(rd\Bigl[c(\eps)-\tfrac{c_2 t^2 }{16}\Big]+ c_1\bigg)=1.
	\end{align*}
	The upper bound on $\eps$ follows from the requirement that $t > 0$. This shows that $\rho$ indeed satisfies Eq.~(\ref{eqn:farApart}).

	Now we move on to Eq.~(\ref{eqn:smallPauli}). For any non-identity Pauli
	matrix $P_k$, define a function
	\begin{equation*}
		f: O \mapsto \tr(P_k O \rho_0 O^T) \,.
	\end{equation*}
	Clearly, we have $\EE[f(O)] = 0$. We again wish to bound the rate of change so that we can use L\'evy's Lemma, so we compute
	\begin{align*}
		(\mathrm{d}_O f)(\Delta)  
		=& \tr\bigl(\rho_0 O^T P_k  \Delta\bigr) + \tr\bigl(P_k O \rho_0 \Delta^T\bigr) \\
		=& \tr\bigl[(\rho_0 O^T P_k + \rho_0^T O^T P_k^T) \Delta \bigr]\,,
	\end{align*}
	which implies that
	\begin{equation*}
		\|\nabla f(O)\|_2 =  
		\|\rho_0 O^T P_k + \rho_0^T O^T P_k^T \|_F 
		\leq \frac{2}{\sqrt{r}} \,. 
	\end{equation*}
	L\'evy's Lemma then gives for all $t>0$
	\begin{equation*}
		\Pr\bigl[ \abs{ \tr P_k \rho } >  t \bigr]
		\leq
		\e^{c_1} \exp\biggl(- \frac{c_2 t^2 rd}{4}\biggr).
	\end{equation*}
	Choosing $t = 2\alpha$, and $\alpha^2 = 4\ln(d^4 \pi/8)/(rd)$, then the union bound gives us
	\begin{align*}
		\Pr\bigl[ (\ref{eqn:smallPauli}) & \text{ does not hold}\bigr] \\
		< &\, d^2 \Pr\bigl[ |\tr P_k \rho| > 2\alpha\bigr] \\
		\leq& \exp\Bigl(2\ln d -\frac{c_2(2\alpha)^2rd}{4}+ c_1\Bigr) = 1\,,
	\end{align*}
	from which the lemma follows.
	\end{proof}

We remark that a version of Lemma~\ref{randomized} continues to hold even if we can adaptively choose from as many as $2^{O(n)}$ additional measurements which are globally unitarily equivalent to Pauli measurements. 

Combining the two previous lemmas yields a precise formulation of Theorem \ref{Thm:lower}.
\begin{theorem}[Precise version of Theorem~\ref{Thm:lower}]
	Fix $\eps \in (0, 1-\tfrac{r}{d})$ and $\delta \in [0,1)$. Then for our bound to allow for $M^*(\eps) \le \delta$ we must have that the number of copies $t$ of $\rho$ grows like
	\begin{equation*}
		t = \Omega\biggl(\frac{r^2 d^2}{\log d}\biggr)\,,
	\end{equation*}
	where the implicit constant depends on $\delta$ and $\eps$.
\end{theorem}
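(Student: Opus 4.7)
The plan is to assemble the theorem from the two preceding lemmas in a direct way. I would first apply Lemma~\ref{randomized} to produce a large packing of rank-$r$ states that are simultaneously pairwise far apart in trace distance and nearly unbiased for every two-outcome Pauli measurement. Then I would feed this packing into Lemma~\ref{minimaxlemma} and extract the desired lower bound on $t$.

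Concretely, the packing $\rho_1,\dots,\rho_s$ can be built iteratively. Start with any rank-$r$ projection $\rho_1$ whose non-identity Pauli expectation values are all at most $2\alpha$ in magnitude; Lemma~\ref{randomized} applied with no prior states (so only condition~(\ref{eqn:smallPauli}) needs to be verified) shows that a Haar-random $\rho_1$ has this property with positive probability. Then, for each $i \geq 2$, apply Lemma~\ref{randomized} to $\rho_1,\dots,\rho_{i-1}$ to produce $\rho_i$ that is $\eps$-far in trace distance from each $\rho_j$ with $j<i$ and whose Pauli biases are again at most $\alpha$. This iteration can be continued as long as $i-1 < e^{c(\eps) rd}$, yielding a packing of size $s$ with $\log s = c(\eps) rd$ and $\alpha^2 = O\bigl(\log d/(rd)\bigr)$.

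Next, substitute this packing into Lemma~\ref{minimaxlemma} with $\Delta$ equal to the trace distance and $\calP$ equal to the set of two-outcome Pauli measurements $\{\tfrac{1}{2}(\Id \pm P_k)\}_{P_k \neq \Id}$. Requiring $M^*(\eps) \leq \delta$ and rearranging the conclusion of the lemma yields
\begin{equation*}
4\alpha^2 t + 1 \geq (1-\delta) \log s.
\end{equation*}
Plugging in the values of $\log s$ and $\alpha^2$ and solving for $t$ produces $t = \Omega\bigl(r^2 d^2/\log d\bigr)$, with the implicit constant depending on $\delta$ and on $\eps$ through $c(\eps)$.

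The main obstacle is really just bookkeeping, since Lemma~\ref{randomized} only asserts the existence of a single additional state rather than producing the entire packing at once. The key observation that makes the iterative construction go through is that Pauli unbiasedness is an individual condition on each $\rho_i$, not a joint condition, so adding states one at a time via the lemma automatically equips each new $\rho_i$ (and $\rho_1$, via the $s=0$ case) with the required small-Pauli property while simultaneously enforcing the pairwise trace-distance separation. After that, the remainder of the argument is algebra on the two exponents: the packing is of size exponential in $rd$ while the individual-state unbiasedness is only of order $\sqrt{\log d/(rd)}$, and these combine through the Fano-based inequality of Lemma~\ref{minimaxlemma} to produce the stated $r^2 d^2/\log d$ scaling.
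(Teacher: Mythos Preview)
Your proposal is correct and follows essentially the same approach as the paper: the paper explicitly states that one applies Lemma~\ref{randomized} iteratively (up to $s < e^{c(\eps) rd}$ times) to build the packing, and then combines with Lemma~\ref{minimaxlemma} to obtain the bound. Your bookkeeping of the iterative construction (including the $s=0$ base case) and the final algebraic step $t \geq \bigl((1-\delta)\log s - 1\bigr)/(4\alpha^2) = \Omega(r^2 d^2/\log d)$ are exactly what the paper intends.
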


% !TEX root = ../tradeoff.tex

%------------------------------------------------------------------------------------------------------------%
\section{Certifying the State Estimate}\label{S:cert}
%------------------------------------------------------------------------------------------------------------%

Here we sketch how the technique of \emph{direct fidelity estimation} (DFE), introduced in Refs.~\cite{Flammia2011, daSilva2011} for pure states, can be used to estimate the fidelity between a low rank estimate $\hat\rho$ and the true state $\rho$. The only assumption we make is that $\hat\rho$ is a positive semidefinite matrix with $\Tr(\hat\rho) \le 1$ and $\rank(\hat\rho) = r$. No assumption at all is needed on $\rho$. In fact, we also do not assume that we obtained the estimate $\hat\rho$ from any of the estimators which were discussed previously. Our certification procedure works regardless of how one obtains $\hat\rho$, and so it even applies to the situation where $\hat\rho$ was chosen from a subset of variational ansatz states, as in \cite{Cramer2010a}.

Recall that the main idea of DFE is to take a known pure state $\proj\psi$ and from it define a probability distribution $\Pr(i)$ such that by estimating the Pauli expectation values $\Tr(\rho P_i)$ and suitably averaging it over $\Pr(i)$ we can learn an estimate of  $\bra\psi\rho\ket\psi$. In fact, one does not need to do a full average; simply sampling from the distribution a few times is sufficient to produce a good estimate. 

For non-Hermitian rank-$1$ matrices $\ketbra{\phi_j}{\phi_k}$ instead of pure states, we need a very slight modification of DFE. Following the notation in Ref.~\cite{Flammia2011}, we simply redefine the probability distribution as $\Pr(i) = \abs{\bra{\phi_j}P_i\ket{\phi_k}}^2/d$ and the random variable $X = \Tr(P_i \rho)/\bra{\phi_k}P_i\ket{\phi_j}$. It is easy to check that $\E(X) = \bra{\phi_j}\rho\ket{\phi_k}$ and the variance of $X$ is at most one. Then all of the conclusions in Refs.~\cite{Flammia2011, daSilva2011} hold for estimating the quantity $\bra{\phi_j}\rho\ket{\phi_k}$. In particular, we can obtain an estimate to within an error $\pm \eps$ with probability $1-\delta$ by using only single-copy Pauli measurements and $O(\frac{1}{\eps^2\delta} + \frac{d \log(1/\delta)}{\eps^2})$ copies of $\rho$.

Our next result shows that by obtaining several such estimates using DFE, we can also infer an estimate of the mixed state fidelity between an unknown state $\rho$ and a low rank estimate $\hat\rho$, given by
\begin{align}\label{E:fidelity-def}
	F(\rho,\hat\rho) = \bigl[\Tr \sqrt{G}\bigr]^2\,,
\end{align}
where for brevity we define $G = \sqrt{\hat\rho}\rho\sqrt{\hat\rho}$. (Note that some authors use the square root of this quantity as the fidelity. Our definition matches Ref.~\cite{Flammia2011}.) The asymptotic cost in sample complexity is far less than the cost of initially obtaining the estimate $\hat\rho$ when $r$ is sufficiently small compared to $d$.

\begin{theorem}
Given a state estimate $\hat\rho$ with $\rank(\hat\rho) = r$, the number of copies $t$ of the state $\rho$ required for estimating $F(\rho,\hat\rho)$ to within $\pm\veps$ with probability $1-\delta$ using single-copy Pauli measurements satisfies 
\begin{align}
	t=O\left(\frac{r^5}{\veps^4}\bigr[d \log(r^2/\delta) + r^2/\delta\bigl]\right) \,.
\end{align}
\end{theorem}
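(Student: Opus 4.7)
The plan is to reduce the estimation of $F(\rho,\hat\rho) = \bigl(\Tr\sqrt{G}\bigr)^2$ to estimating the $O(r^2)$ matrix entries of $\rho$ in an eigenbasis of $\hat\rho$. Diagonalizing $\hat\rho = \sum_{j=1}^{r} \lambda_j \proj{\phi_j}$ with $\sum_j \lambda_j \le 1$, the operator $G = \sqrt{\hat\rho}\rho\sqrt{\hat\rho}$ is supported on $\Span\{\ket{\phi_j}\}_{j=1}^{r}$ and, in that basis, is the $r \times r$ Hermitian matrix $G_{jk} = \sqrt{\lambda_j \lambda_k}\,\bra{\phi_j}\rho\ket{\phi_k}$. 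So it suffices to produce estimators $\hat M_{jk}$ for the $r(r+1)/2$ upper-triangular entries $M_{jk} := \bra{\phi_j}\rho\ket{\phi_k}$, assemble the Hermitian estimator $\hat G_{jk} := \sqrt{\lambda_j \lambda_k}\, \hat M_{jk}$ (projected onto the PSD cone if small negative eigenvalues appear), and report $\hat F := \bigl(\Tr\sqrt{\hat G}\bigr)^2$ clipped to $[0,1]$.

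The first step is to apply the modified DFE procedure described just above the theorem to each rank-$1$ operator $\ketbra{\phi_j}{\phi_k}$, spending $O\bigl(1/(\varepsilon'^2 \delta') + d\log(1/\delta')/\varepsilon'^2\bigr)$ copies of $\rho$ per entry for additive accuracy $\pm\varepsilon'$ and failure probability $\delta'$. A union bound over the $O(r^2)$ entries with $\delta' = \Theta(\delta/r^2)$ then makes all entries simultaneously accurate with probability at least $1 - \delta$.

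The crux of the argument is the perturbation bound from entrywise accuracy $\varepsilon'$ to fidelity accuracy $\varepsilon$. Since $\sum_{jk} \lambda_j \lambda_k = (\Tr\hat\rho)^2 \le 1$, the entrywise bounds yield $\|G - \hat G\|_F \le \varepsilon'$, and because $G - \hat G$ lives in an $r$-dimensional subspace, $\|G - \hat G\|_\tr \le \sqrt{r}\,\varepsilon'$. The matrix square root is only H\"older continuous, but the Powers--St\o rmer inequality $\|\sqrt A - \sqrt B\|_F^2 \le \|A - B\|_\tr$ (valid for PSD $A, B$) gives $\|\sqrt G - \sqrt{\hat G}\|_F \le r^{1/4}\sqrt{\varepsilon'}$. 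One further Cauchy--Schwarz step, $|\Tr X| \le \sqrt{r}\,\|X\|_F$ for any operator $X$ supported on an $r$-dimensional subspace, produces $|\Tr\sqrt G - \Tr\sqrt{\hat G}| \le r^{3/4}\sqrt{\varepsilon'}$. Multiplying by $\Tr\sqrt G + \Tr\sqrt{\hat G} = O(1)$ (since $\Tr\sqrt G = \sqrt F \le 1$ and $\Tr\sqrt{\hat G}$ differs from it by at most the bound just derived) yields $|F - \hat F| = O\bigl(r^{3/4}\sqrt{\varepsilon'}\bigr)$.

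Choosing $\varepsilon' = \Theta\bigl(\varepsilon^2/r^{3/2}\bigr)$ then guarantees $|F - \hat F| \le \varepsilon$. Substituting into the per-entry DFE cost and multiplying by the $O(r^2)$ entries produces exactly $t = O\bigl(r^5/\varepsilon^4 \cdot [d\log(r^2/\delta) + r^2/\delta]\bigr)$. The hard part is this perturbation step: the weaker operator-norm estimate $\|\sqrt A - \sqrt B\| \le \sqrt{\|A - B\|}$ alone would give only $|\Tr\sqrt G - \Tr\sqrt{\hat G}| \le r\sqrt{\varepsilon'}$, costing an extra factor of $r$ in the final sample count. Powers--St\o rmer, combined with the dimensional bound $\|\cdot\|_\tr \le \sqrt{r}\,\|\cdot\|_F$ for rank-$r$ operators, is what restores the correct exponent $3/4$ and matches the claimed scaling.
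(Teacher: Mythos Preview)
Your proof is correct and follows essentially the same architecture as the paper: estimate the $O(r^2)$ matrix elements $\bra{\phi_j}\rho\ket{\phi_k}$ by DFE with a union bound, assemble $\hat G$, and control the perturbation of $(\Tr\sqrt{\,\cdot\,})^2$. The only substantive difference is the square-root perturbation inequality: the paper invokes Bhatia's Theorem~X.1.3, $\|\sqrt{G}-\sqrt{\hat G^+}\|_\tr \le \|\sqrt{|G-\hat G^+|}\|_\tr$, followed by Cauchy--Schwarz on the eigenvalues, whereas you use Powers--St\o rmer, $\|\sqrt{G}-\sqrt{\hat G}\|_F^2 \le \|G-\hat G\|_\tr$, followed by $|\Tr X|\le\sqrt{r}\,\|X\|_F$. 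Both routes land on the same bound $\Delta \le O(r^{3/4}\sqrt{\varepsilon'})$ and hence the same sample complexity. Your handling of the PSD projection is also slightly cleaner: working in Frobenius norm lets you use non-expansiveness of projection onto the convex PSD cone directly, while the paper passes through a Jordan-decomposition argument to get $\|G-\hat G^+\|_\tr \le 2\|E\|_\tr$.
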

\begin{proof}
The result uses the DFE protocol of Refs.~\cite{Flammia2011, daSilva2011}, modified as mentioned above, where the states $\ket{\phi_j}$ are the eigenstates of $\hat\rho$.

Expand $\hat\rho$ in its eigenbasis as $\hat\rho = \sum_{j=1}^r \lambda_j \ket{\phi_j}\!\bra{\phi_j}$.  Then we have 
\begin{align}
	G = \sum_{j,k=1}^r \sqrt{\lambda_j \lambda_k} \bra{\phi_j} \rho \ket{\phi_k}\ket{\phi_j}\!\bra{\phi_k} \,.
\end{align}

For all $1\leq j\leq k\leq r$, we use direct fidelity estimation to obtain an estimate $\hat{g}_{jk}$ of the matrix element $\bra{\phi_j}\rho\ket{\phi_k}$, up to some additive error $\eps_{jk}$ that is bounded by a constant, $\abs{\eps_{jk}}\le\eps_0$. 

If each estimate is accurate with probability $1-2 \delta/(r^2+r)$, then by the union bound the probability that they are all accurate is at least $1-\delta$. The total number of copies $t$ required for this is 
\begin{align}
	t=O\left(\frac{r^2}{\eps_0^2}\bigr[d \log(r^2/\delta) + r^2/\delta\bigl]\right) \,.
\end{align}

Let $\hat{g}_{jk} = \hat{g}_{kj}^*$, and let 
\begin{align}
	\hat{G} = \sum_{j,k=1}^r \sqrt{\lambda_j \lambda_k} \hat{g}_{jk} \ket{\phi_j}\!\bra{\phi_k}
\end{align}
be our estimate for $G$.  Finally, let $\hat{G}^+ = [\hat{G}]_+$ be the positive part of the Hermitian matrix $\hat{G}$, and let $\hat{F} = \bigl[\Tr\sqrt{\hat{G}^+}\bigr]^2$ be our estimate of the fidelity $F(\rho,\hat\rho)$. Note that we may assume that $\hat F \le 1$, since if it were larger, we can only improve our estimate by just truncating it back to 1. 

We now bound the error of this fidelity estimate.  We can write $\hat{G}$ as a perturbation of $G$, $\hat{G} = G+E$, where 
\begin{align}
	E = \sum_{j,k=1}^r \sqrt{\lambda_j \lambda_k} \eps_{jk} \ket{\phi_j}\!\bra{\phi_k} \,.
\end{align}
First notice that the Frobenius norm of this perturbation is small,
\begin{align}
	\norm{E}_F = \biggl(\sum_{j,k} \lambda_j \lambda_k \abs{\eps_{jk}}^2 \biggr)^{1/2} \le \eps_0 \biggl\vert\sum_j \lambda_j \biggr\rvert = \eps_0 \,.
\end{align}
(If $\hat\rho$ is subnormalized, then this last equality becomes an inequality.)

Next observe that
\begin{align*}
	\abs{F-\hat F} = \Bigl\lvert \Tr(\sqrt{G} + \sqrt{\hat{G}^+}) \Tr(\sqrt{G} - \sqrt{\hat{G}^+}) \Bigr\rvert \le 2 \Delta \,,
\end{align*}
where we define 
\begin{align}
	\Delta = \bigl\lvert\Tr(\sqrt{G} - \sqrt{\hat{G}^+})\bigr\rvert \,.
\end{align}
Using the reverse triangle inequality, we can bound $\Delta$ in terms of the trace norm
\begin{align}
	\Delta \le \bigl\lVert\sqrt{G} - \sqrt{\hat{G}^+}\bigr\rVert_\tr \,.
\end{align}
Using~\cite[Thm. X.1.3]{Bhatia1996} in the first step, we find that
\begin{align*}
	\bigl\lVert\sqrt{G} - \sqrt{\hat{G}^+}\bigr\rVert_\tr \le 
	\Bigl\lVert\sqrt{\abs{G -\hat{G}^+}}\,\Bigr\rVert_\tr \le 
	\sqrt{r}\sqrt{\bigl\lVert G -\hat{G}^+\bigr\rVert_\tr}\,,
\end{align*}
where the second bound follows from the Cauchy-Schwarz inequality on the vector of eigenvalues of $\abs{G -\hat{G}^+}$. Using the Jordan decomposition of a Hermitian matrix into a difference of positive matrices $X = [X]_+ - [X]_-$, we can rewrite $G-\hat{G}^+$ as
\begin{align}
	G-\hat{G}^+ = G-[G+E]_+ = -\eps - [G+E]_- \,.
\end{align}
Then by the triangle inequality and positivity of $G$, 
\begin{align}
	\norm{G-\hat{G}^+}_\tr \le \norm{E}_\tr + \norm{[G+E]_-}_\tr \le 2 \norm{E}_\tr\,.
\end{align}
Using the standard estimate $\norm{E}_\tr \le \sqrt{r}\norm{E}_F$, we find
\begin{align}
	\Delta \le r^{3/4} \sqrt{2 \eps_0} \,.
\end{align}
This gives our desired error bound, albeit in terms of $\eps_0$ instead of the final quantity $\veps$. To get our total error to vanish, we take $\veps = 2 r^{3/4} \sqrt{2 \eps_0}$, which gives us the final scaling in the sample complexity.
\end{proof}

As a final remark, we note that by computing $\Delta$ for the special case $\hat\rho = \rho = \one/r$, $\eps = \eps_0 \one/r$, we find that 
\begin{align}
	\Delta = \sqrt{1+r \eps_0}-1 \,,
\end{align}
so the error bound for this protocol is tight with respect to the scaling in $\eps_0$ (and hence $\veps$). However, we cannot rule out that there are other protocols that achieve a better scaling. Also, it seems that the upper bound for the current protocol with respect to $r$ could potentially be improved by a factor of $r$ with a more careful analysis.

% !TEX root = ../tradeoff.tex

%------------------------------------------------------------------------------------------------------------%
\section{Numerical Simulations\label{S:numerics}}
%------------------------------------------------------------------------------------------------------------%

We numerically simulated the reconstruction of a quantum state given Pauli measurements and the estimators from Eqs.~(\ref{eqn-ds}) and (\ref{eqn-lasso}). Here we compare these estimates to those obtained from a standard maximum likelihood estimate (MLE) as a benchmark. 

As mentioned previously, all of our estimators have the advantage that they are convex functions with convex constraints, so the powerful methods of convex programming~\cite{Boyd2004} guarantee that the exact solution can be found efficiently and certified. We take full advantage of this certifiability and do simulations which can be certified by interior-point algorithms. This way we can separate out the performance of the estimators themselves from the (sometimes heuristic) methods used to compute them on very large scale instances. 

For our estimators, we will explicitly enforce the positivity condition. That is, we will simulate the following slight modifications of Eqs.~(\ref{eqn-ds}) and (\ref{eqn-lasso}) given by
\begin{equation}\label{E:new-ds}
	\hat{\rho}_\ds' = \arg\min_{X\succeq0}\, \Tr X \st \norm{\calA^*(\calA(X)-y)} \leq \lambda\,,
\end{equation}
and
\begin{equation}\label{E:new-lasso}
	\hat{\rho}_\lasso' = \arg\min_{X\succeq0}\, \tfrac{1}{2} \norm{\calA(X)-y}_2^2 + \mu \Tr X \,.
\end{equation}
Moreover, whenever the trace of the resulting estimate is is less than one, we renormalize the state, $\hat\rho \mapsfrom \hat\rho/\Tr(\hat\rho)$. 

%------------------------------------------------------------------------------------------------------------%
\subsection{Setting the Estimator Parameters \texorpdfstring{$\lambda$}{lambda} and \texorpdfstring{$\mu$}{mu}}
%------------------------------------------------------------------------------------------------------------%

From Thm.~\ref{thm-errorbound} we know roughly how we should choose the free parameters $\lambda$ and $\mu$, modulo the unknown constants $C_i$, for which we will have to make an empirical guess. We guess that the log factor is an artifact of the analysis, and that the state is nearly pure. Then we could choose $\lambda \sim \mu \sim d/\sqrt{t}$. For the matrix Dantzig selector of \Eref{E:new-ds}, we specifically  chose $\lambda = 
3 d/\sqrt{t}$ for our simulations. However, this did not lead to very good performance for the matrix Lasso of \Eref{E:new-lasso} when $m$ was large. We chose instead $\mu = 4 m/\sqrt{t}$, which agrees with $\lambda$ when $m \sim d$, as it can be for nearly pure states. 

We stress that very little effort was made to optimize these choices for the parameters. We simply picked a few integer values for the constants in front (namely $2,\dots,5$), and chose the constant that worked best upon a casual inspection of a few data points. We leave open the problem of determining what the optimal choices are for $\lambda$ and $\mu$.

%------------------------------------------------------------------------------------------------------------%
\subsection{Time Needed to Switch Measurement Settings}
%------------------------------------------------------------------------------------------------------------%

Real measurements take time. In a laboratory setting, time is a precious commodity for a possibly non-obvious reason. An underlying assumption in the way we typically formulate tomography is that the unknown state $\rho$ can be prepared identically many times. However, the parameters governing the experiment often drift over a certain timescale, and this means that beyond this characteristic time, it is no longer appropriate to describe the measured ensemble by the symmetric product state $\rho^{\otimes t}$.  

To account for this characteristic timescale, we introduce the following simple model.  We assume that the entire experiment takes place in a fixed amount of time $T$.  In a real experiment, this is the timescale beyond which we cannot confidently prepare the same state $\rho$, or it is the amount of time it takes for the student in the lab to graduate, or for the grant funding to run out, whichever comes first.  Within this time limit, we can either change the measurement \emph{settings}, or we can take more \emph{samples}.  We assume that there is a unit cost to taking a single measurement, but that switching to a different measurement configuration takes an amount of time $c$.

At the one extreme, the switching cost $c$ to change measurement settings could be quite large, so that it is only barely possible to conduct enough independent measurements that tomography is possible within the allotted time $T$.  In this case, we expect that compressed tomography will outperform standard methods, since these methods have not been optimized for the case of incomplete data.  At the other extreme, the switching cost $c$ could be set to $0$ (or some other small value), in which case we are only accounting for the cost of sampling.  Here it is not clear which method is better, for the following reason.  Although the standard methods are able to take a sufficient amount of data in this case, each observable will attain comparatively little precision relative to the fewer observables measured with higher precision in the case of compressed tomography for the same fixed amount of time $T$. One of our goals is to investigate if there are any tradeoffs in this simple model. 

For the relative cost $c$ between switching measurement settings and taking one sample, we use $c = 20$, a number inspired by the current ion trap experiments done by the Blatt group in Innsbruck. However, we note that the conclusions don't seem to be very sensitive to this choice, as long as we don't do something outrageous like $c > t$, which would preclude measuring more than even one observable.

%------------------------------------------------------------------------------------------------------------%
\subsection{Other Simulation Parameters\label{S:params}}
%------------------------------------------------------------------------------------------------------------%

We consider the following ensembles of quantum states. We initially choose $n=5$ qubit states from the invariant Haar measure on $\C^{2^n}$. 
Then we add noise to the state by applying independent and identical depolarizing channels to each of the $n$ qubits. Recall that the depolarizing channel with strength $\gamma$ acts on a single qubit as
\begin{align}
	\mathcal{D}_\gamma(\rho) = \gamma \frac{\one}{2} + (1-\gamma) \rho \,,
\end{align}
that is, with probability $\gamma$ the qubit is replaced by a maximally mixed state and otherwise it is left alone. Our simulations assume very weak decoherence, and we use $\gamma = 0.01$.

We use two measures to quantify the quality of a state reconstruction $\hat\rho$ relative to the underlying true state $\rho$, namely the (squared) fidelity $\lVert \sqrt{\hat\rho}\sqrt{\rho}\rVert^2_\tr$ and the trace distance $\frac{1}{2} \norm{\rho - \hat\rho}_\tr$.

We used the interior-point solver SeDuMi~\cite{Sturm1999} to do the optimizations in Eqs.~(\ref{E:new-ds}) and (\ref{E:new-lasso}). Although these algorithms are not suitable for large numbers of qubits, they produce very accurate solutions, which allows a more reliable comparison between the different estimators.  The data and the code used to generate the data for these plots can be found with the source code for the arXiv version of this paper. For larger numbers of qubits, one may instead use specialized algorithms such as SVT, TFOCS, or FPCA to solve these convex programs~\cite{Cai2010,Becker2010,Ma2011}; however, the quality of the solutions depends somewhat on the algorithm, and it is an open problem to explore this in more detail.

For the MLE, we used the iterative algorithm of Ref.~\cite{Rehacek2001}, which has previously been used in e.g.\ Ref.~\cite{Molina-Terriza2004}. This method converged on every example we tried, so we did not have to use the more sophisticated ``diluted'' method of Ref.~\cite{Rehacek2007} that guarantees convergence. 

For the purposes of our simulations, we sampled our Pauli operators \emph{without} replacement.

%------------------------------------------------------------------------------------------------------------%
\subsection{Results and Analysis}
%------------------------------------------------------------------------------------------------------------%

\begin{figure}[t!]
\centering
\begin{tabular}{c}
a)\hspace{-10pt} \includegraphics[scale=.42]{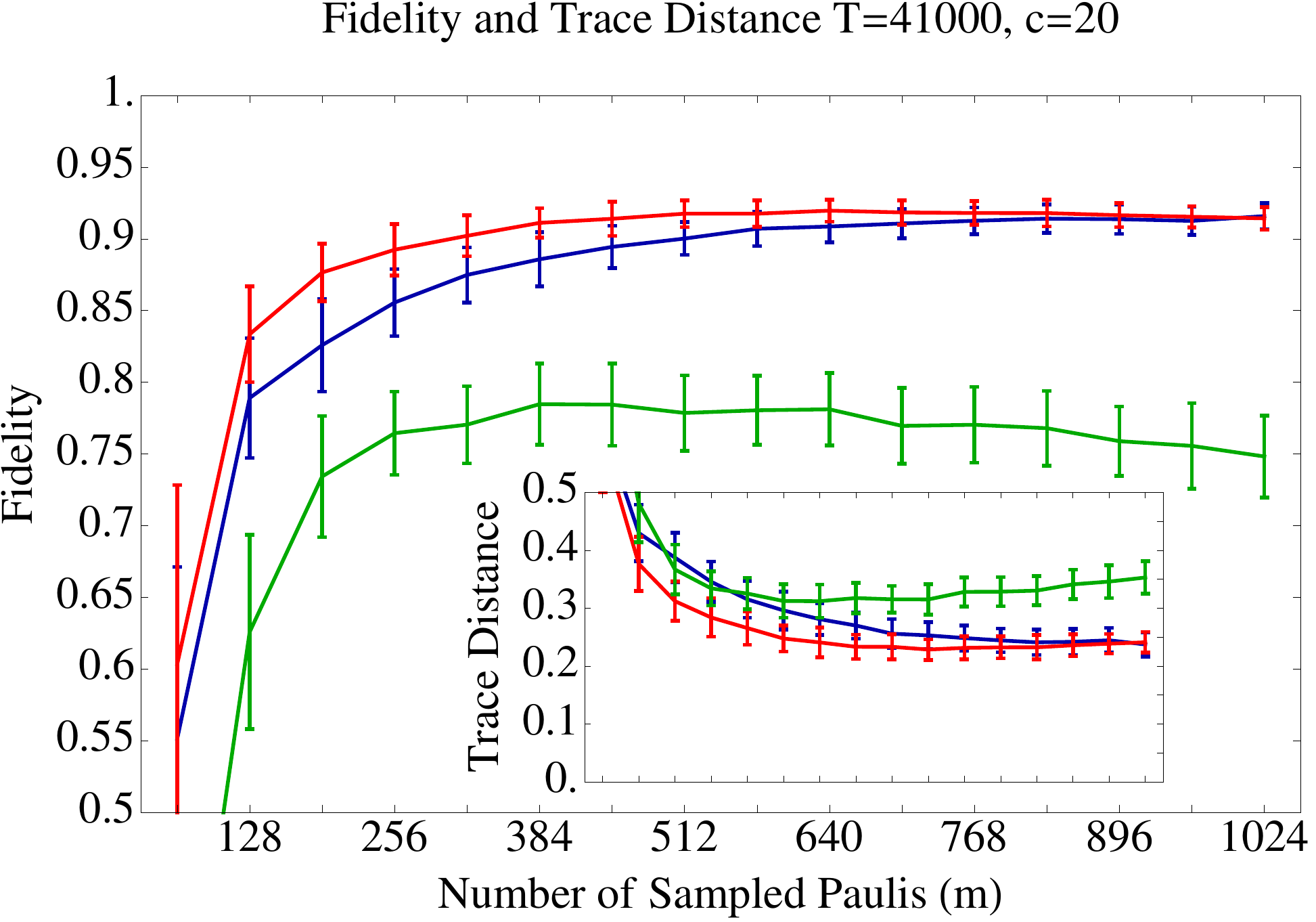} \vspace{5pt}\\
b)\hspace{-10pt} \includegraphics[scale=.42]{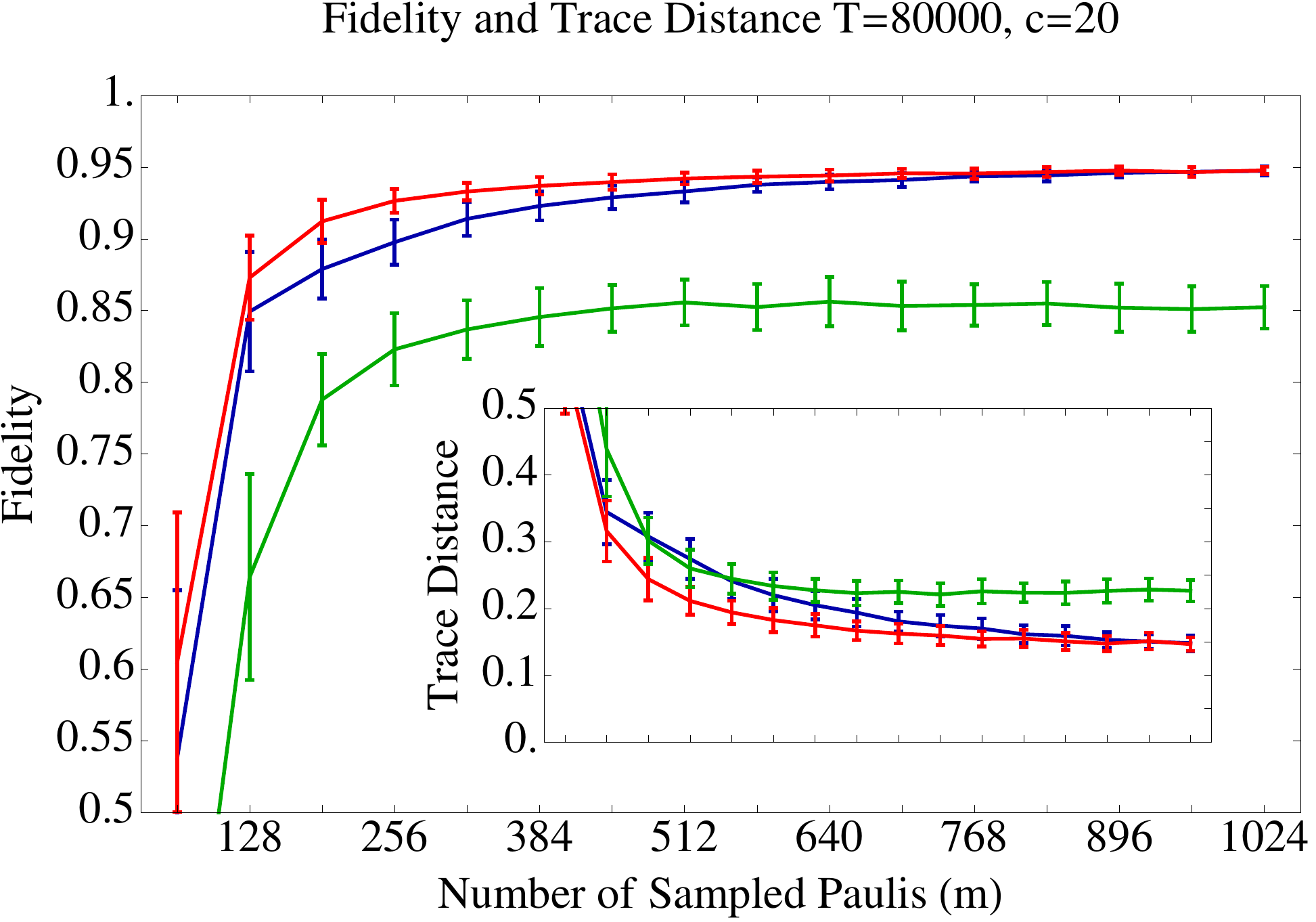}  \vspace{5pt}\\
c)\hspace{-10pt} \includegraphics[scale=.42]{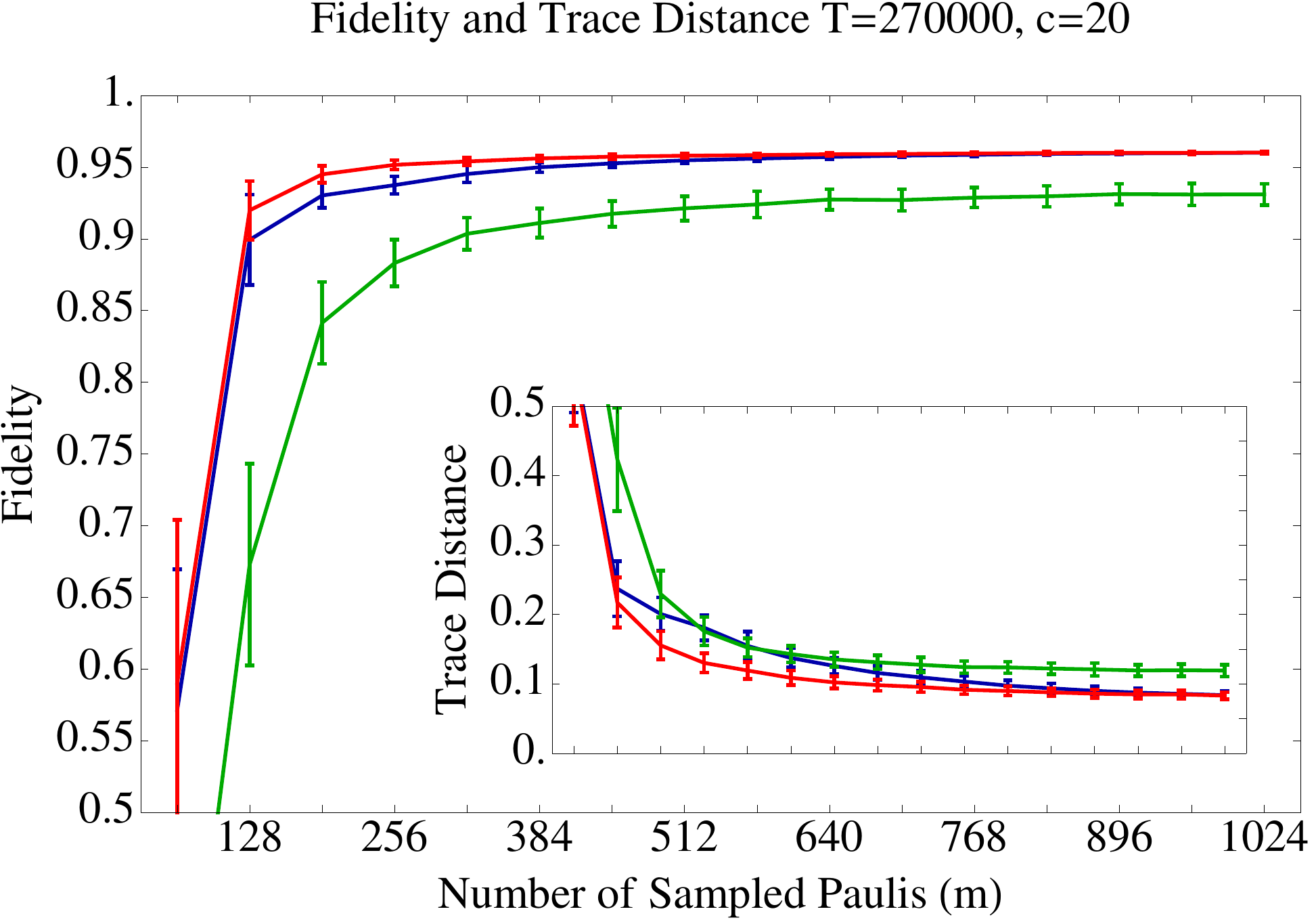}
\end{tabular}
\begin{caption}{\label{F:numerics}Fidelity and trace distance (inset) as a function of $m$, the number of sampled Paulis. Plots a), b) and c) are for a total measurement time of $T=41$k, $80$k, and $270$k respectively, in units where the cost to measure a single copy is one unit of time, and the cost to switch measurement settings is $c=20$ units. The three estimators plotted are the matrix Dantzig selector (\Eref{E:new-ds}, blue), the matrix Lasso (\Eref{E:new-lasso}, red), and a standard MLE (green). Each bar shows the mean and standard deviation from 120 Haar-random 5-qubit states with 1\% local depolarizing noise. Our estimators consistently outperform MLE, even after optimizing the MLE over $m$. See the main text for more details.\vspace{-5pt}}
\end{caption}
\end{figure}

The data are depicted in Fig.~\ref{F:numerics}. The three subfigures a--c use three different values for the total sampling time $T$ in increasing order. We have plotted both the fidelity and the trace distance (inset) between the recovered solution and the true state. %We note that a simple least squares estimator performed worse than all the others, so we do not include a plot of that here.

Several features are immediately apparent. First, we see that both of the compressed sensing estimators consistently outperform MLE along a wide range of values of $m$, the number of Paulis sampled, even when we choose the optimal value of $m$ just for the MLE. Once $m$ is sufficiently large (but still $m \ll d^2$) \emph{all} of the estimators converge to a reasonably high fidelity estimate. Thus, it is not just the compressed sensing estimators which are capable of reconstructing nearly low rank states with few measurement settings, but rather this seems to be a generic feature of many estimators. However, the compressed sensing estimators seem to be particularly well-suited for the task. 

When the total amount of time $T$ is \emph{very} small (Fig.~\ref{F:numerics}a), then there is a large advantage to using compressed sensing. In this regime, there is barely enough time to make one measurement per setting if we insist on making an informationally complete measurement, so the measurement statistics aren't even Gaussian. Thus, the fluctuations are so large that trying to fit a density operator to these data just leads to poor results because you wind up fitting only noise. While the advantage of compressed sensing is clear in this regime, it is not applicable when we are interested in extremely high-fidelity state reconstructions (namely greater than 95\%). 

As the total amount of time available increases, all of the estimators of course converge to higher fidelity estimates. Interestingly, for the compressed sensing estimators, there seems to be \emph{no tradeoff whatsoever} between the number of measurement settings and the fidelity once $T$ and $m$ are sufficiently large. That is, the curve is completely flat as a function of $m$ above a certain cutoff. This is most notable for the matrix Lasso, which consistently performs at least as well as the matrix Dantzig selector, and often better. These observations are consistent with the bounds proven earlier. 

The flat curve as a function of $m$ is especially interesting, because it suggests that there is no real drawback to using small values of $m$. Smaller values of $m$ are attractive from the computational point of view because the runtime of each reconstruction algorithm scales with $m$. This makes a strong case for adopting these estimators in the future, and at the very least more numerical studies are needed to investigate how well these estimators perform more generally.

We draw the following conclusion from these simulations. The best performance for a fixed value of $T$ is given by the matrix Lasso estimator of Eqs.~(\ref{eqn-lasso}) and (\ref{E:new-lasso}) in the regime where $m$ is nearly as small as possible. Here the fidelity is larger than the other estimators (if only by a small or negligible amount when $T$ is large), but using a small value for $m$ means smaller memory and processing time requirements when doing the state reconstruction. MLE consistently underperforms the compressed sensing estimators, but still seems to ``see'' the low-rank nature of the underlying state and converges to a reasonable estimate even when $m$ is small. 

% !TEX root = ../tradeoff.tex

%------------------------------------------------------------------------------------------------------------%
\section{Process Tomography}\label{S:process}
%------------------------------------------------------------------------------------------------------------%

Compressed sensing techniques can also be applied to quantum \textit{process} tomography.  Here, our method has an advantage when the unknown quantum process has small Kraus rank, meaning it can be expressed using only a few Kraus operators.  This occurs, for example, when the unknown process consists of unitary evolution (acting globally on the entire system) combined with local noise (acting on each qubit individually, or more generally, acting on small subsets of the qubits).

Consider a system of $n$ qubits, with Hilbert space dimension $d = 2^n$.  Let $\calE$ be a completely positive (CP) map from $\CC^{d\times d}$ to $\CC^{d\times d}$, and suppose that $\calE$ has Kraus rank $r$.  Using compressed sensing, we will show how to characterize $\calE$ using $m = O(r d^2 \log d)$ settings.  (For comparison, standard process tomography requires $d^4$ settings, since $\calE$ contains $d^4$ independent parameters in general.)  Furthermore, our compressed sensing method requires only the ability to prepare product eigenstates of Pauli operators and perform Pauli measurements, and it does not require any ancilla qubits.

We remark that, except for the ancilla-assisted method, just the notion of ``measurement settings'' for process tomography does not capture all of the complexity because of the need to have an \emph{input} to the channel. Here we define one ``input setting'' to be a basis of states from which the channel input should be sampled uniformly. Then the total number of settings $m$ is the sum of the number of measurement settings (Paulis, in our case) and input settings. This definition justifies the claim that the number of settings for our compressed process tomography scheme is $m = O(rd^2 \log d)$. 

The analysis here focuses entirely on the number of settings $m$. We forgo a detailed analysis of $t$, the sample complexity, and instead leave this open for future work.

Note that there is a related set of techniques for estimating an unknown process that is \textit{elementwise sparse} with respect to some known, experimentally accessible basis~\cite{Shabani2011}.  These techniques are not directly comparable to ours, since they assume a greater amount of prior knowledge about the process, and they use measurements that depend on this prior knowledge.  We will discuss this in more detail at the conclusion of this section.

%------------------------------------------------------------------------------------------------------------%
\subsection{Our Method}
%------------------------------------------------------------------------------------------------------------%

First, recall the Jamio\l{}kowski isomorphism~\cite{Jamiolkowski1972}:  the process $\calE$ is completely and uniquely characterized by the state 
\[
	\rho_\calE = (\calE\ox\calI)(\ket{\psi_0}\bra{\psi_0}), 
\]
where $\ket{\psi_0} = \frac{1}{\sqrt{d}} \sum_{j=1}^{d} \ket{j}\ox\ket{j}$.  Note that when $\calE$ has Kraus rank $r$, the state $\rho_\calE$ has rank $r$.  This immediately gives us a way to do compressed quantum process tomography: first prepare the Jamio\l{}kowski state $\rho_\calE$ (by adjoining an ancilla, preparing the maximally entangled state $\ket{\psi_0}$, and applying $\calE$); then do compressed quantum state tomography on $\rho_\calE$; see Figure \ref{fig-qptancilla}.

\begin{figure}
\ifx\JPicScale\undefined\def\JPicScale{1}\fi
\unitlength \JPicScale mm
\begin{picture}(40,27.5)(0,0)
\put(0,5){\makebox(0,0)[cc]{}}

\put(0,5){\makebox(0,0)[cc]{}}

\put(0,15){\makebox(0,0)[cc]{$\ket{\psi_0}$}}

\linethickness{0.3mm}
\multiput(5,15)(0.12,-0.21){42}{\line(0,-1){0.21}}
\linethickness{0.3mm}
\put(10,6.25){\line(1,0){22.5}}
\linethickness{0.3mm}
\put(32.5,2.5){\line(1,0){7.5}}
\put(32.5,2.5){\line(0,1){7.5}}
\put(40,2.5){\line(0,1){7.5}}
\put(32.5,10){\line(1,0){7.5}}
\put(36.25,6.25){\makebox(0,0)[cc]{$P_B$}}

\linethickness{0.3mm}
\multiput(5,15)(0.12,0.21){42}{\line(0,1){0.21}}
\linethickness{0.3mm}
\put(10,23.75){\line(1,0){5}}
\linethickness{0.3mm}
\put(22.5,23.75){\line(1,0){10}}
\linethickness{0.3mm}
\put(15,27.5){\line(1,0){7.5}}
\put(15,20){\line(0,1){7.5}}
\put(22.5,20){\line(0,1){7.5}}
\put(15,20){\line(1,0){7.5}}
\linethickness{0.3mm}
\put(32.5,20){\line(1,0){7.5}}
\put(32.5,20){\line(0,1){7.5}}
\put(40,20){\line(0,1){7.5}}
\put(32.5,27.5){\line(1,0){7.5}}
\put(18.75,23.75){\makebox(0,0)[cc]{$\calE$}}

\put(36.25,23.75){\makebox(0,0)[cc]{$P_A$}}

\put(20,23.75){\makebox(0,0)[cc]{}}

\end{picture}
\caption{\label{fig-qptancilla}Compressed quantum process tomography using an ancilla.  The quantum circuit represents a single measurement setting, where one measures the observable $P_A$ on the system and $P_B$ on the ancilla.}
\end{figure}
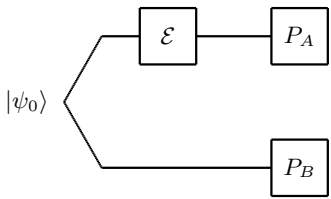

We now show a more direct implementation of compressed quantum process tomography that is equivalent to the above procedure, but does not require an ancilla. Observe that in the above procedure, we need to estimate expectation values of the form 
\begin{equation*}
\Tr\bigl((P_A\ox P_B) \rho_\calE\bigr) = \Tr\bigl((P_A\ox P_B) (\calE\ox\calI) (\ket{\psi_0}\bra{\psi_0})\bigr), 
\end{equation*}
where $P_A$ and $P_B$ are Pauli matrices.  By using the Kraus decomposition, it is straightforward to derive the equivalent expression
\begin{align}\label{eqn-qpt-ev}
	\Tr((P_A \ox P_B) \rho_\calE) = \frac{1}{d} \Tr(P_A \calE(\overline{P_B})), 
\end{align}
where the bar denotes complex conjugation in the standard basis.

%\begin{equation*}
%\begin{split}
%\Tr((P_A&\tnsr P_B) \rho_\calE) \\
% &= \sum_\ell \Tr[((P_A K_\ell) \tnsr P_B) \ket{\psi_0}\bra{\psi_0} (K_\ell^\dagger \tnsr I)] \\
% &= \sum_\ell \bra{\psi_0} ((K_\ell^\dagger P_A K_\ell) \tnsr P_B) \ket{\psi_0} \\
% &= \sum_\ell \frac{1}{d} \sum_{j,j'=0}^{d-1} \bra{j}P_B\ket{j'} \bra{j} K_\ell^\dagger P_A K_\ell \ket{j'} \\
% &= \sum_\ell \frac{1}{d} \sum_{j,j'=0}^{d-1} \bra{j'}\overline{P_B}\ket{j} \bra{j} K_\ell^\dagger P_A K_\ell \ket{j'} \\
% &= \sum_\ell \frac{1}{d} \Tr(\overline{P_B} K_\ell^\dagger P_A K_\ell) \\
% &= \sum_\ell \frac{1}{d} \Tr(P_A K_\ell \overline{P_B} K_\ell^\dagger)
% = \frac{1}{d} \Tr(P_A \calE(\overline{P_B})), 
%\end{split}
%\end{equation*}
%where we used the Kraus-operator representation $\calE(\rho) = \sum_\ell K_\ell \rho K_\ell^\dagger$, and the fact that $\bra{j}P_B\ket{j'} = \overline{\bra{j'}P_B^\dagger\ket{j}} = \bra{j'}\overline{P_B}\ket{j}$ (since $P_B$ is Hermitian, and $\ket{j}$, $\ket{j'}$ are real).  

We now show how to estimate the expectation value (\ref{eqn-qpt-ev}).  Let $\lambda_j$ and $\ket{\phi_j}$ denote the eigenvalues and eigenvectors of $\overline{P_B}$.  Then we have 
\begin{equation*}
\Tr((P_A\ox P_B) \rho_\calE)
 = \frac{1}{d} \sum_{j=1}^{d} \lambda_j \Tr(P_A \calE(\ket{\phi_j}\bra{\phi_j})).
\end{equation*}
To estimate this quantity, we repeat the following experiment many times, and average the results:  choose $j\in [d]$ uniformly at random, prepare the state $\ket{\phi_j}$, apply the process $\calE$, measure the observable $P_A$, and multiply the measurement result by $\lambda_j$.  (See Figure \ref{fig-qptdirect}.)  In this way, we learn the expectation values of the Jamio\l{}kowski state $\rho_\calE$ without using an ancilla.  We then use compressed quantum state tomography to learn $\rho_\calE$, and from this we recover $\calE$.

\begin{figure}
\ifx\JPicScale\undefined\def\JPicScale{1}\fi
\unitlength \JPicScale mm
\begin{picture}(41.25,10.75)(0,0)

\put(0,7){\makebox(0,0)[cc]{$\ket{\phi_j(\overline{P_B})}$}}

\linethickness{0.3mm}
\put(7.5,7){\line(1,0){8.75}}
\linethickness{0.3mm}
\put(23.75,7){\line(1,0){10}}
\linethickness{0.3mm}
\put(16.25,10.75){\line(1,0){7.5}}
\put(16.25,3.25){\line(0,1){7.5}}
\put(23.75,3.25){\line(0,1){7.5}}
\put(16.25,3.25){\line(1,0){7.5}}
\linethickness{0.3mm}
\put(33.75,3.25){\line(1,0){7.5}}
\put(33.75,3.25){\line(0,1){7.5}}
\put(41.25,3.25){\line(0,1){7.5}}
\put(33.75,10.75){\line(1,0){7.5}}
\put(20,7){\makebox(0,0)[cc]{$\calE$}}

\put(37.5,7){\makebox(0,0)[cc]{$P_A$}}

\end{picture}
\caption{\label{fig-qptdirect}Compressed quantum process tomography, implemented directly without an ancilla.  Here one prepares a random eigenstate of $\overline{P_B}$, applies the process $\calE$, and measures the observable $P_A$ on the output.}
\end{figure}
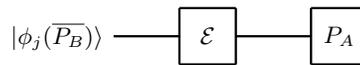

%------------------------------------------------------------------------------------------------------------%
\subsection{Related Work\label{S:process-related-work}}
%------------------------------------------------------------------------------------------------------------%

Our method is somewhat different from the method described in Ref.~\cite{Shabani2011}. Essentially the difference is that our method works for any quantum process with small Kraus rank, whereas the method of Shabani \textit{et al}.\ works for a quantum process that is elementwise sparse in a known basis (provided this basis is experimentally accessible in a certain sense). The main advantage of the Shabani \textit{et al}.\ method is that it can be much faster: for a quantum process $\calE$ that is $s$-sparse (i.e., has $s$ nonzero matrix elements), it requires only $O(s\log d)$ settings. The main disadvantage is that it requires more prior knowledge about $\calE$, and is more difficult to apply.  While it has been demonstrated in a number of scenarios, there does not seem to be a general recipe for designing measurements that are both experimentally feasible and effective in the sparse basis of $\calE$.

To clarify these issues, we now briefly review the Shabani \textit{et al}.\ method.  We assume that we know a basis $\Gamma = \set{\Gamma_\alpha \;|\; \alpha \in [d^2]}$ in which the process $\calE$ is $s$-sparse.  For example, when $\calE$ is close to some unitary evolution $U$, one can construct $\Gamma$ using the SVD basis of $U$.  This guarantees that, if $\calE$ contains no noise, it will be perfectly sparse in the basis $\Gamma$. However, in practice, $\calE$ will contain noise, which need not be sparse in the basis $\Gamma$; any non-sparse components will not in general be estimated accurately.  The success of the Shabani \textit{et al}.\ method therefore rests on the assumption that the \textit{noise} is also sparse in the basis $\Gamma$. Although this assumption has been verified in a few specific scenarios, it seems less clear why it should hold in general. By comparison, our method simply assumes that the noise is described by a process matrix that is low rank; this can be rigorously justified for any noise process that involves only local interactions or few-body processes.

The other complication with the Shabani \textit{et al}.\ method concerns the design of the state preparations and measurements. On one hand, these must satisfy the RIP condition for $s$-sparse matrices over the basis $\Gamma$; on the other hand, they must be easy to implement experimentally. This has been demonstrated in some cases, by using states and measurements that are ``random'' enough to show concentration of measure behaviors, but also have tensor product structure. However, these constructions are not guaranteed to work in general for an arbitrary basis $\Gamma$.

We leave open the problem of doing a comparative study between these and other methods~\cite{Mohseni2006,Mohseni2007}, akin to Ref.~\cite{Mohseni2008}.

% !TEX root = ../tradeoff.tex

%------------------------------------------------------------------------------------------------------------%
\section{Conclusion}\label{S:conclusion}
%------------------------------------------------------------------------------------------------------------%

In this work, we have introduced two new estimators for compressed tomography: the matrix Dantzig selector and the matrix Lasso (Eqs.~(\ref{eqn-ds}) and (\ref{eqn-lasso}).) We have proved that the sample complexity for obtaining an estimate that is accurate to within $\eps$ in the trace distance scales like $O(\tfrac{r^2 d^2}{\eps^2} \log d)$ for rank-$r$ states, and that for higher rank states, the additional error is proportional to the truncation error. This error scaling is optimal up to constant multiplicative factors, and requires measuring only $O(r d \poly\log d)$ Pauli expectation values, a fact we proved using the RIP~\cite{Liu2011}. We also proved that our sample complexity upper bound is within $\poly \log d$ of the sample complexity of the optimal minimax estimator, where the risk function is given by a trace norm confidence interval. We showed how a modification of direct fidelity estimation can be used to unconditionally certify the estimate using a number of measurements which is asymptotically negligible compared to obtaining the original estimate. We numerically simulated our estimators and found that they outperform MLE, giving higher fidelity estimates from the same amount of data. Finally, we generalized our method to quantum process tomography using only Pauli measurements and preparation of product eigenstates of Pauli operators. 

There are many interesting open questions that remain. On the theoretical side, one open problem is of course to tighten the gap that remains between the sample complexity upper and lower bounds. Another open problem is to try to prove optimality with respect to alternative criteria other than minimax risk. For example, it would be interesting to find a useful notion of average case optimality. 

One major open problem is to switch focus from two-outcome Pauli measurements to alternative measurements which are still experimentally feasible. For example, measurements in a local basis have $2^n$ outcomes and are not directly analyzable using our techniques. It would be very interesting to give an analysis of our estimators from the perspective of such local basis measurements. One difficulty, however, is that something like the RIP is not likely to hold in this case, so we will need additional techniques.

On the numerical side, some of the open questions are the following. First, it would be very interesting to do a more detailed numerical study of the performance of our estimators. While they have clearly outperformed MLE in the simulations reported here, there is no question that this is a narrow range of parameters on which we have tested these estimators. It would be interesting to do additional comparative studies between these and other estimators to see how robust these performance enhancements are. It would also be very interesting to study fast first-order solvers such as Refs.~\cite{Cai2010,Becker2010,Ma2011} which could compute estimates on a large number of qubits (10 or more). 

The success of the estimators we studied depends on being able to find good values for the parameters $\lambda$ and $\mu$. While we have used simple heuristics to pick particular values, a detailed study of the optimal values for these parameters could only improve the quality of our estimators. Moreover, MLE seems to enjoy the same ``plateau'' phenomenon, where the quality of the estimate is insensitive to $m$ above a certain cutoff. This leads us to speculate that this is a generic phenomenon among many estimators, and that perhaps there are even better choices for estimators than the ones we benchmark here.

\acknowledgements

We thank R.~Blume-Kohout, M.~Kleinmann and T.~Monz for discussions and B.~Brown for some preliminary numerical investigations. We additionally thank B.~Englert for hosting the workshop on quantum tomography in Singapore where some of this work was completed. STF was supported by the IARPA MQCO program. DG was supported by the Excellence Initiative of the German Federal and State Governments (grant ZUK~43). Contributions to this work by NIST, an agency of the US government, are not subject to copyright laws. JE was supported by EURYI, the EU (Q-Essence), and BMBF (QuOReP).

%------------------------------------------------------------------------------------------------------------%
\bibliographystyle{titles}
\bibliography{tradeoff}
%------------------------------------------------------------------------------------------------------------%

\end{document}